\date{November 2020}
\DeclareMathOperator*{\nn}{\nonumber}
\renewcommand{\c}[1]{(zI-#1)^{-1}}
\renewcommand{\a}[1]{(z^{-1}I-#1)^{-1}}
\newtheorem{lemma}{Lemma}
\newtheorem{problem}{Problem}
\newtheorem{theorem}{Theorem}
\theoremstyle{definition}
\newtheorem{assumption}{Assumption}
\def\blfootnote{\gdef\@thefnmark{}\@footnotetext}
\title{Regret-Optimal Control under Partial Observability}
\author{Joudi Hajar\thanks{J. Hajar and B. Hassibi are with the Department of Electrical Engineering at Caltech (emails: \{jhajar,hassibi\}@caltech.edu).}, Oron Sabag\thanks{O. Sabag is with the School of Computer Science and Engineering at the Hebrew University of Jerusalem (email: oron.sabag@mail.huji.ac.il).}\thanks{The work of O. Sabag was conducted at Caltech during 2020-2022. The work of J. Hajar was done at Caltech in 2023.}, Babak Hassibi}
\date{July 2021}
\begin{document}
\maketitle
\begin{abstract}
 This paper studies online solutions for regret-optimal control in partially observable systems over an infinite-horizon. Regret-optimal control aims to minimize the difference in LQR cost between causal and non-causal controllers while considering the worst-case regret across all $\ell_2$-norm-bounded disturbance and measurement sequences. Building on ideas from \cite{sabag2021regret} on the the full-information setting, our work extends the framework to the scenario of partial observability (measurement-feedback). We derive an explicit state-space solution when the non-causal solution is the one that minimizes the $\mathcal H_2$ criterion, and demonstrate its practical utility on several practical examples. These results underscore the framework's significant relevance and applicability in real-world systems. 
\end{abstract}
\section{Introduction}\label{sec:intro}
Regret-optimal control stands as a robust framework for designing controllers capable of dealing with uncertainty within dynamic systems through the lens of competitive analysis. The main idea is to design a controller that minimizes the disparity between its own performance and that of a reference controller that has a superior performance. Typical competitive metrics are the regret and the competitive ratio, defined as the difference or ratio between the controller's performance and the reference controller's performance. One of the fundamental questions studied in the current manuscript is how the designed policy will be affected by the choice of the reference controller, an impractical entity that is aimed to be mimicked.

The regret optimal control framework was introduced for the full-information control setup where the designed controller has a strictly causal access to the system's state and the reference controller is the one that has non-causal access to the states \cite{sabag2021regret} (and its extended version \cite{sabag2023regretoptimal}). Interestingly, the optimal regret and, as a result, the regret optimal controller can be derived explicitly in the infinite horizon scenario via a reduction to the Nehari problem \cite{nehari1957bounded}. This is in contrast to typical solutions in robust control that involve an additional optimization over the corresponding sub-optimal problem. The study of this framework was then extended to the finite horizon setup with and without safety constraints \cite{goel2020regret,pmlr-v168-martin22a,didier2022system}, and to the competitive ratio metric \cite{sabag2022optimal,goel2022competitive}. In all of those works (see also \cite{karapetyan2022regret,martin2022follow,liu2023robust,hajar2023wasserstein,yan2023distributionally,sabag2022regret}), regret has been consistently shown to be a compelling metric for controllers' design as it nicely interpolates the robust and average performance of $\mathcal H_\infty$ and $\mathcal H_2$ control. Motivated by this, we consider the regret optimal control approach to partially observable control systems.

For the full information control setup, a natural candidate for the reference controller is a clairvoyant controller that has a non-causal access to the entire disturbances sequence (equivalently, the state generated by the uncontrolled system) before hand. This clairvoyant controller is unique and outperforms any controller for any disturbance sequence \cite{blackbook}. 
For partially observable systems, the clairvoyant controller has a non-causal access to the system's measurements generated by the uncontrolled dynamical system. However, there does not exist a unique controller that outperforms any other controller \emph{for any noise sequence}. Therefore, the clairvoyant controller has to be optimal under particular metric e.g. $\mathcal H_2$ or $\mathcal H_\infty$. Our first result shows that the regret-optimal control problem can be reduced to a Nehari problem no matter what is the reference controller. This is a generalization of \cite{goel21L4DC} for the $\mathcal H_2$ clairvoyant controller and it elucidates the structure of the regret-optimal controller. Previously, it was shown that the regret optimal control problem can be reduced to an $\mathcal H_\infty$ problem if the reference controller has a direct access to the system disturbance \cite{goel2022measurement}. In contrast, our reference controller only assumes non-causal access to the measurements. This implies that the operational comparison between the LQR costs is made over both the disturbance and the measurement sequences. The reduction to the Nehari problem also serves as the first step towards an explicit solution for the state-space setting as the Nehari solution has an explicit solution \cite{sabag2023regretoptimal}.


For the state space setting, we focus on the non-causal $\mathcal H_2$ controller as our reference for the regret metric. Through spectral factorizations and decompositions, we are able to derive an explicit regret-optimal controller. The controller is represented by a finite-dimensional and time invariant state-space whose inputs are the system measurements. Thus, it can be implemented in real-time. This opens up new avenues for practical application of regret-optimal control, with potential impacts in various fields, including robotics, autonomous vehicles, and control systems design.

We summarize our main contribution:
\begin{enumerate}
    \item We study regret-optimal control with partial observability and present a reduction to the Nehari problem for \emph{any reference controller}. This is presented in Theorem~\ref{th:reduction}, and holds for any time horizon.
    \item For the state-space setting, we derive an explicit solution for regret-optimal control under partial observability. This solution applies to the doubly infinite horizon regime and the reference controller is the $\mathcal H_2$ non-causal controller that has access to the system measurements. To the best of our knowledge, this is the first solution of partial observability in the context of regret optimal control that involves a comparison of control costs (of the designed and the reference controllers) that depend on \emph{two sequences}, the disturbances and the measurements noise. 
\end{enumerate}

To validate the effectiveness of our approach, we conduct simulations on a range of systems, demonstrating the efficacy of our proposed method in various settings. We organize the paper as follows: Section \ref{sec:setting} provides the preliminaries, Section \ref{sec::main1} presents the reduction of the regret control problem to a Nehari problem, Section \ref{sec::main2} offers the state space solution of the controller, and Section \ref{sec::sim} presents the simulations. All proofs are to be found in the appendix of the extended version of this paper available online.

\section{Problem Definition and Preliminaries}\label{sec:setting}

\subsection{Notation}\label{subsec:}
$\mathbb{R}$ denotes the set of real numbers, $\| \cdot\|_2$ is the 2-norm, $\| \cdot\|$ is the operator norm, $^\ast$ stands for adjoint, $z$ is the complex variable of the $z$-domain, and $I$ is the identity matrix. A calligraphic letter represents a doubly infinite operator, and its corresponding standard letter denotes the transfer matrix of the operator.

\subsection{The control setting}

We examine a discrete-time, linear time-invariant (LTI) dynamical system described by the following state-space model:
\begin{align}\label{eq::ss} 
    x_{i+1}&= Fx_i + G_1w_i + G_2 u_i\nn\\
    s_i &= Lx_i \nn\\
    y_i &= Hx_i + v_i.
\end{align}
In this representation, $x_i\in \mathbb{R}^n$ denotes the system's state, $u_i\in \mathbb{R}^m$ represents the control input, $w_i \in \mathbb{R}^n$ is the process noise, $s_i \in \mathbb{R}^q$ is the regulated signal while $y_i \in \mathbb{R}^p$  represents the noisy state measurements, and $v_i \in \mathbb{R}^p$ is the measurement noise.  

Throughout this paper, we adopt an operator form representation of the system dynamics~\eqref{eq::ss}. Considering a  \emph{doubly infinite horizon}, let us define\\ 
\( ~~~~~~~~~~~~s \coloneqq \left( \begin{array}{c} \vdots \\ s_{-1} \\ s_0 \\ s_1\\ \vdots\end{array} \right) ~~~,~~~
u \coloneqq \left( \begin{array}{c} \vdots \\ u_{-1} \\ u_0 \\u_1\\ \vdots \end{array} \right),~~~~~~~~~~~~\) and similarly for $w$, $x$, $y$, and $v$. With these definitions, we can express the system dynamics~\eqref{eq::ss} equivalently using an operator-based formulation:
\begin{align}\label{eq::of}
    s &= \mathcal P_{11}w + \mathcal P_{12}u \nn\\
    y &= \mathcal P_{21}w + \mathcal P_{22}u + v\nn\\
    u &= \mathcal K y,
\end{align}
where the operators $\mathcal P_{ij}$ for $i,j\in\{1,2\}$ are strictly causal time-invariant operators (i.e, strictly lower triangular block Toeplitz operators) corresponding to the dynamics~\eqref{eq::ss}, and $\mathcal K$ is a linear causal controller to be designed. 

We consider the Linear-Quadratic Gaussian (LQG) cost given as
\begin{equation}\label{eq::LQRcost1}
    J(u,w,v)\coloneqq s^\ast \bar{Q}s+u^\ast \bar{R}u
\end{equation}
where $\bar{Q}, \bar{R}$ are block diagonal operators with $Q,R\succ 0$ on their diagonals, respectively. To simplify the notation, we redefine $s$ and $u$ as $s\leftarrow \bar{Q}^{\frac{1}{2}}s$, and $u\leftarrow \bar{R}^{\frac{1}{2}}u$, which results in~\eqref{eq::LQRcost1} taking the form:
\begin{align} \label{eq::LQRcost2}
    J(u,w,v) &=s^\ast s + u^\ast u = \begin{pmatrix}
    w^\ast & v^\ast
    \end{pmatrix}T_\mathcal K^\ast T_\mathcal K \begin{pmatrix}
    w \\ v  
    \end{pmatrix}.
\end{align}
In this expression,  $T_\mathcal{K}$ represents the transfer operator associated with $\mathcal K$, mapping the input sequences $\begin{pmatrix}
    w\\v
\end{pmatrix}$ to the regulated state and control sequences $\begin{pmatrix}
    s\\u
\end{pmatrix}$, i.e., $$    T_{\mathcal K}: \begin{pmatrix}
    w\\v
    \end{pmatrix}\to \begin{pmatrix}
    s\\u
    \end{pmatrix}.$$
The operator $T_\mathcal{K}$ can be computed explicitly as
{\small
\begin{align}
    T_{\mathcal K}&= \begin{pmatrix}
    \mathcal P_{11} + \mathcal P_{12}\mathcal K(I - \mathcal P_{22}\mathcal K)^{-1} \mathcal P_{21}&\mathcal P_{12}\mathcal K(I - \mathcal P_{22}\mathcal K)^{-1}  \\
\mathcal K(I - \mathcal P_{22}\mathcal K)^{-1} \mathcal P_{21}& \mathcal K(I - \mathcal P_{22}\mathcal K)^{-1}
    \end{pmatrix}.\nonumber
\end{align}}
As $\mathcal P_{22}$ and $\mathcal P_{22}\mathcal K$ are strictly causal, we can define the Youla parameterization:
\begin{equation}
    \mathcal Q = \mathcal K (I - \mathcal P_{22}\mathcal K)^{-1}
\end{equation} 
or equivalently, 
\begin{equation}
    \mathcal K = \mathcal Q(\mathcal I + \mathcal Q \mathcal P_{22})^{-1}
\end{equation}
which implies that $\mathcal K$ is causal (/strictly-causal) if and only if $\mathcal Q$ is causal (/strictly-causal) \cite{blackbook}. $T_\mathcal K$ can be rewritten in terms of $\mathcal Q$ as
\begin{align}
        T_{\mathcal Q}&= \begin{pmatrix}
    \mathcal P_{11} + \mathcal P_{12}\mathcal Q \mathcal P_{21}&\mathcal P_{12}\mathcal Q  \\
\mathcal Q \mathcal P_{21}& \mathcal Q
    \end{pmatrix}.
\end{align}

\subsection{Regret-optimal control under partial observability}
Given a benchmark non-causal controller $\mathcal{Q}_0$, we define the regret as:
\begin{align}
    R(\mathcal{Q},w,v)&\coloneqq J(\mathcal{Q},w,v)- J(\mathcal{Q}_0,w,v) \nn\\
    &= \begin{pmatrix}
    w^\ast & v^\ast
\end{pmatrix} (T_\mathcal{Q}^\ast T_\mathcal{Q}-T_{\mathcal{Q}_0}^\ast T_{\mathcal{Q}_0})\begin{pmatrix}
    w \\
    v\\
\end{pmatrix}\label{eq::regret},
\end{align}
which quantifies the additional cost incurred by a causal controller due to its lack of knowledge about future disturbances. In simpler terms, regret represents the disparity between the cost accrued by a causal controller and the cost accumulated by a reference non-causal controller equipped with complete foresight of the disturbance trajectory. For a reference controller $\mathcal Q_0$, the regret-optimal control problem is  defined below:
\begin{problem}[The Regret Problem]\label{pb::RO}
The problem of minimizing regret in the measurement-feedback, infinite-horizon setting is:
\begin{equation}\label{eq::ROMF}
\inf_{\mathcal{Q}} \sup_{\substack{ w,v}}\frac{R(\mathcal{Q},w,v)}{\|w\|_2^2+ \|v\|_2^2},
\end{equation} with $R(\mathcal{Q},w,v)$ as in \eqref{eq::regret}.
\end{problem}
The goal in the next sections is to solve explicitely the regret problem, first by reducing it to a Nehari problem (Section \ref{sec::main1}) and then finding a state-space solution for the controller $\mathcal{Q}$ (Section \ref{sec::main2}).

\section{Main result I: Reduction to a Nehari Problem} \label{sec::main1}
In this section, we focus on reducing the regret problem to a Nehari problem, facilitating the derivation of an explicit solution for the causal regret-optimal controller $\mathcal Q_c$. We will later transform this solution into a state-space representation in Section \ref{sec::main2}. 

The Nehari problem aims to find the optimal causal approximation for an operator that is strictly anti-causal, as measured by the operator norm:
\begin{problem}[The Nehari Problem] Find a causal block Toeplitz operator $\mathcal{Y}$ that minimizes $\|\mathcal Y- \mathcal Z\|$, where $\mathcal{Z}$ is a given strictly anti-causal block Toeplitz operator.
\end{problem}
Theorem 5 of \cite{sabag2023regretoptimal} demonstrates that in cases where the operator has a state-space structure, we can explicitly determine both the minimal norm and the approximation $\mathcal Y$, as we will endeavor in Section \ref{sec::main2}.

Now, to tackle the regret problem, we first need to choose the reference (non-causal) controller, $Q_0$. $Q_0$ should be chosen as an operationally-superior controller with respect to the one to be designed. Otherwise, the designed controller can be chosen as $Q_0$ and the optimal regret is zero. We first present a general result which shows that the regret problem with respect to \emph{any reference controller} can be reduced to a Nehari problem. 
\begin{theorem}[Reduction to a Nehari problem]\label{th:reduction}
A $\gamma$-solution to the causal regret problem with respect to $\mathcal Q_0$, i.e, a suboptimal solution to Problem \ref{pb::RO} that satisfies
\begin{equation}\label{eq::gammaSol}
\sup_{\substack{ w,v}}\frac{R(\mathcal{Q},w,v)}{\|w\|_2^2+ \|v\|_2^2}\leq\gamma^2,
\end{equation}
exists if and only if there exists a solution to the Nehari problem
\begin{align}\label{eq:th_reduction_Nehari}
    \inf_{\text{causal} \ \mathcal C} \| \mathcal C - \{\mathcal X\}_- \|\le 1
\end{align}
where $\{\mathcal X\}_-$ is defined to be the strictly anti-causal part of the operator 
\begin{equation}\label{eq::opX}
    \mathcal X = \mathcal M^{1/2}\mathcal Q_0\mathcal W^{-/2} + \gamma^{-2}\mathcal M^{-\ast/2} \mathcal T (\mathcal  Q_2 - \mathcal Q_0)\mathcal W^{-/2},
\end{equation}  and can be computed from the canonical factorizations
\begin{align}
     \mathcal M&=\mathcal M^{\ast/2}\mathcal M^{1/2} \nn \\
     &= \gamma^{-2}\mathcal I + \gamma^{-2} \mathcal P_{12}^\ast(\mathcal I +  \gamma^{-2}\mathcal P_{11}
     (\mathcal I + \mathcal P_{21}^\ast \mathcal P_{21})^{-1} \mathcal P_{11}^\ast)  \times \nn \\ & \ \ \ \ \mathcal P_{12},\label{eq::opM}\\
     \mathcal W &=\mathcal W^{\ast/2}\mathcal W^{1/2} \nn\\
     &=  (I + \mathcal P_{21} \mathcal P_{21}^\ast)^{-1} + \gamma^{-4} (\mathcal Q_2 - \mathcal Q_0)^\ast \mathcal T \mathcal M^{-1}\mathcal T \times \nn \\ &\ \ \ \ (\mathcal Q_2 - \mathcal Q_0) \label{eq::opW}.
\end{align}
Here, 
\begin{equation}
    \mathcal T= I+\mathcal P_{12}^\ast\mathcal P_{12}
\end{equation}
and 
\begin{equation}\label{eq::Q2}
    \mathcal Q_2 = - {\mathcal T}^{-1}\mathcal P_{12}^\ast \mathcal P_{11}\mathcal P_{21}^\ast(I + \mathcal P_{21}\mathcal P_{21}^\ast)^{-1}
\end{equation}
is the optimal non-causal $\mathcal H_2$ controller that minimizes the Frobenius norm of $T_\mathcal Q$ \cite{blackbook}.
Moreover, the optimal regret is equal to $\gamma^2$ where $\gamma$  achieves equality in \eqref{eq::gammaSol}. If a solution $\mathcal C$ to \eqref{eq:th_reduction_Nehari} exists, then the regret-optimal controller is given by 
\begin{align}
    \mathcal Q_c = \mathcal M^{-/2} (\mathcal C  + \{\mathcal X\}_+)\mathcal W^{1/2}.\label{eq::optCont}
\end{align}
\end{theorem}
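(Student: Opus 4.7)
The strategy is to turn the operator inequality encoding \eqref{eq::gammaSol} into a pure norm bound $\|\cdot\|\le 1$ and then, via a causality-preserving change of variables, into the Nehari approximation \eqref{eq:th_reduction_Nehari}. Rewriting the constraint as $T_\mathcal Q^*T_\mathcal Q - T_{\mathcal Q_0}^*T_{\mathcal Q_0}\preceq\gamma^2\mathcal I$, I exploit the affine structure $T_\mathcal Q = T_{\mathcal Q_0} + \mathcal A(\mathcal Q-\mathcal Q_0)\mathcal B$, where $\mathcal A = \binom{\mathcal P_{12}}{\mathcal I}$ and $\mathcal B = (\mathcal P_{21},\mathcal I)$, so that $\mathcal A^*\mathcal A = \mathcal T$ and $\mathcal B\mathcal B^* = \mathcal I + \mathcal P_{21}\mathcal P_{21}^*$. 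The key structural fact is the Wiener--Hopf orthogonality $\mathcal A^*T_{\mathcal Q_2}\mathcal B^* = 0$ satisfied by the $\mathcal H_2$-optimal $\mathcal Q_2$ in \eqref{eq::Q2}, which places $\operatorname{range}(T_{\mathcal Q_2}^*\mathcal A) \subseteq \ker\mathcal B$ and will be what lets the proof close.

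Next I would reduce the inequality from $\ell_2(\mathbb R^{n+p})$ to $\ell_2(\mathbb R^p)$. Decomposing any input $z$ as $z = \mathcal B^{\dagger}y + u$ with $y = \mathcal B z$, $u\in\ker\mathcal B$, and $\mathcal B^{\dagger}=\mathcal B^*(\mathcal B\mathcal B^*)^{-1}$, the quadratic form $z^*(T_\mathcal Q^*T_\mathcal Q - T_{\mathcal Q_0}^*T_{\mathcal Q_0})z$ becomes a quadratic in $(y,u)$ whose $y$--$u$ coupling occurs entirely through $\mathcal A^*T_{\mathcal Q_2}$. Performing the worst-case maximization over $u$ eliminates it, and after invoking the push-through identity $(\mathcal I+\mathcal P_{21}\mathcal P_{21}^*)^{-1}\mathcal P_{21}=\mathcal P_{21}(\mathcal I+\mathcal P_{21}^*\mathcal P_{21})^{-1}$, the $\mathcal Q$-quadratic coefficient of the resulting inequality on $\ell_2(\mathbb R^p)$ coincides exactly with $\mathcal M$ in \eqref{eq::opM}. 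Completing the square in $\mathcal Q$ about the obvious target and gathering the residual positive operator into $\mathcal W$ as in \eqref{eq::opW} rewrites the inequality as $\|\mathcal M^{1/2}\mathcal Q\mathcal W^{-/2}-\mathcal X\|\le 1$, with $\mathcal X$ as in \eqref{eq::opX}; a direct expansion confirms that the $\mathcal X^*\mathcal X-\mathcal I$ contribution reproduces the $\mathcal Q_0$-dependent constant, so the two inequalities are equivalent.

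For the Nehari step, I use the canonical causal spectral factorizations $\mathcal M = \mathcal M^{*/2}\mathcal M^{1/2}$ and $\mathcal W = \mathcal W^{*/2}\mathcal W^{1/2}$, which exist for $\gamma$ above the optimal regret level since $\mathcal M,\mathcal W\succ 0$, making $\mathcal M^{1/2}$, $\mathcal M^{-/2}$, $\mathcal W^{1/2}$, $\mathcal W^{-/2}$ all causal with causal inverses. The change of variable $\mathcal C = \mathcal M^{1/2}\mathcal Q\mathcal W^{-/2} - \{\mathcal X\}_+$ is then a bijection on causal operators, and the identity $\mathcal M^{1/2}\mathcal Q\mathcal W^{-/2}-\mathcal X = \mathcal C - \{\mathcal X\}_-$ turns the norm bound into the Nehari problem \eqref{eq:th_reduction_Nehari}. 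Inverting the change of variables yields the controller \eqref{eq::optCont}, and the optimal $\gamma^2$ is the value at which the Nehari minimum equals $1$. \emph{Main obstacle:} the delicate algebra is the $u$-maximization together with the subsequent completion of squares, which must produce operators that match $\mathcal M,\mathcal W,\mathcal X$ on the nose rather than merely up to bounded error; a secondary, but essential, technical ingredient is verifying that $\mathcal M$ and $\mathcal W$ are strictly positive (for admissible $\gamma$) and thus admit the canonical causal factorizations on which the Nehari reduction depends.
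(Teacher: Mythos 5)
Your plan is correct and lands on exactly the paper's chain of reductions, but it reaches the central quadratic inequality by a different mechanism. The paper conjugates $T_{\mathcal Q}$ by two unitary operators $\theta,\psi$ built from the canonical factors of $\mathcal I+\mathcal P_{12}\mathcal P_{12}^\ast$, $\mathcal I+\mathcal P_{12}^\ast\mathcal P_{12}$, $\mathcal I+\mathcal P_{21}\mathcal P_{21}^\ast$, $\mathcal I+\mathcal P_{21}^\ast\mathcal P_{21}$, reads off $\mathcal Q_2$ from the rotated $(2,2)$ block, and then takes a Schur complement of a $2\times 2$ block operator inequality; you instead use the affine parameterization $T_{\mathcal Q}=T_{\mathcal Q_0}+\mathcal A(\mathcal Q-\mathcal Q_0)\mathcal B$ with $\mathcal A^\ast\mathcal A=\mathcal T$, $\mathcal B\mathcal B^\ast=\mathcal I+\mathcal P_{21}\mathcal P_{21}^\ast$, the Wiener--Hopf orthogonality $\mathcal A^\ast T_{\mathcal Q_2}\mathcal B^\ast=0$ (which indeed holds for \eqref{eq::Q2} and also gives $\operatorname{range}(T_{\mathcal Q_2}^\ast\mathcal A)\subseteq\ker\mathcal B$), and the splitting $z=\mathcal B^\dagger y+u$, $u\in\ker\mathcal B$, followed by maximization over $u$. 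These are equivalent moves in different clothing: your splitting of the input space is precisely the paper's $\psi$-rotation (its two block columns span $\operatorname{range}(\mathcal B^\ast)$ and $\ker\mathcal B$), and the worst-case elimination of $u$ is the Schur-complement step; your version avoids constructing $\theta,\psi$ and checking their unitarity, at the price of verifying by direct computation the orthogonality and the identity $\mathcal A^\ast T_{\mathcal Q_2}T_{\mathcal Q_2}^\ast\mathcal A=\mathcal P_{12}^\ast\mathcal P_{11}(\mathcal I+\mathcal P_{21}^\ast\mathcal P_{21})^{-1}\mathcal P_{11}^\ast\mathcal P_{12}$, both of which do check out and yield the same completed square, hence the same $\mathcal X$, $\mathcal W$ in \eqref{eq::opX}--\eqref{eq::opW}, the Nehari problem \eqref{eq:th_reduction_Nehari}, and the controller \eqref{eq::optCont}. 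Two cosmetic points: the $\mathcal Q$-quadratic coefficient you obtain is $\gamma^2\mathcal M$ (the paper's $\mathcal D$), not $\mathcal M$ itself — the $\gamma^2$ cancels against the right-hand side, so this is only a normalization to state carefully — and positivity of $\mathcal M$ and $\mathcal W$ holds for every $\gamma>0$ (they dominate $\gamma^{-2}\mathcal I$ and $(\mathcal I+\mathcal P_{21}\mathcal P_{21}^\ast)^{-1}$ respectively), so the existence of the causal canonical factors with causal inverses is no more delicate than in the paper's own argument.
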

The proofs for Theorem \ref{th:reduction} and all subsequent lemmas and theorems can be found in Appendix B of the extended paper online. We also remark that the reduction is not limited to the doubly-infinite horizon regime and holds for the finite and one-sided infinite regimes as well. The special case of $\mathcal Q_0 = \mathcal Q_2$ in Theorem \ref{th:reduction} appeared in \cite{goel21L4DC}. Theorem \ref{th:reduction} shows that the control problem is equivalent to a Nehari problem, and thus the minimum norm can be determined if one is able to perform the required factorizations. 

However, formulating an implementable controller is not straightforward since canonical factorizations cannot be directly computed for high-dimensional operators. In such cases, the reference controller can be selected based on its superiority in the time-domain, controllability space, or observability. Remarkably, the regret problem can be reduced to a Nehari problem in all these scenarios, even though the Nehari problem primarily deals with operator comparisons in the time-domain. A specific instance of Theorem \ref{th:reduction} occurs when the reference controller aligns with the optimal non-causal $\mathcal{H}_2$ controller.

We proceed to provide an explicit solution for the regret-optimal controller in the infinite-horizon state-space control problem while focusing on the the special case $\mathcal Q_0 = \mathcal Q_2$. In this case, the operators $\mathcal X$ \eqref{eq::opX} and $\mathcal W$ \eqref{eq::opW} simplify to 
\begin{align}
 \mathcal X &= \mathcal M^{1/2}\mathcal Q_2\mathcal W^{-/2} \label{eq::X-z}\\
\mathcal W&=\mathcal W^{\ast/2}\mathcal W^{1/2} = \mathcal (I + \mathcal P_{21} \mathcal P_{21}^\ast)^{-1} \label{eq::W-z}.
\end{align}

\section{Main result II: The state-space model}\label{sec::main2}
In this section, our primary objective is to establish the state-space formulation of the controller. To achieve this, we begin in Section \ref{subsec:zdomain} by representing the operators in the $z$-domain. We then proceed to Section \ref{subsec:Factorization}, where the main focus is on factorizing $\mathcal W$ \eqref{eq::opW} and $\mathcal{M}$ \eqref{eq::opM}. Subsequently, in Section \ref{subsec:decomposition}, we delve into the decomposition of $\mathcal X$ \eqref{eq::opX} into causal and strictly-anti-causal operators. Finally, in Section \ref{subsec:finalsol}, we leverage these factorizations and operator decompositions to provide both the explicit solution to the Nehari problem and a state-space representation of the regret-optimal controller.
\subsection{Operators in $z$-domain}\label{subsec:zdomain}
The operators in the general problem \eqref{eq::of} take the form of Toeplitz operators that can be described concisely using the transfer matrices in the $z$-domain:
\begin{alignat}{2}
    P_{11}(z) &\triangleq L\c{F}G_1, &\quad P_{12}(z) &\triangleq L\c{F}G_2 \nn\\
    P_{21}(z) &\triangleq H\c{F}G_1, &\quad P_{22}(z) &\triangleq H\c{F}G_2.
\end{alignat}
Also, the solution can be expressed using the $z$-domain counterpart of the operators \eqref{eq::opX}-\eqref{eq::opW}, and \eqref{eq::Q2} defined in Theorem \ref{th:reduction}, as follows:
\begin{align}
   X(z) &= M^{1/2}(z) Q_2(z) W^{-/2}(z) \label{eq::tfX}\\
   M(z) &= M^{\ast/2}(z^\ast)M^{1/2}(z)=\gamma^{-2} I + \gamma^{-2} P_{12}^\ast(z^{-\ast}) (I + \nn\\
 & \ \ \ \ \gamma^{-2}  P_{11}(z)  ( I + P_{21}^\ast (z^{-\ast}) P_{21}(z))^{-1} P_{11}^\ast(z^{-\ast}) )\times\nn\\ &\ \ \ \ P_{12}(z) \\
W(z)&=W^{\ast/2}(z^{\ast})W^{1/2}(z)=\mathcal (I + P_{21}(z) P_{21}^\ast(z^{-\ast}))^{-1}\\
 Q_2(z)&= - ( I + P_{12}^\ast(z^{-\ast}) P_{12})^{-1}(z)  P_{12}^\ast(z^{-\ast}) P_{11}(z)\times\nn \\ &\ \ \ \  P_{21}^\ast(z^{-\ast}) (I + P_{21}(z) P_{21}^\ast(z^{-\ast}))^{-1}
\end{align}
We also define the operator 
\begin{equation}\label{eq::opS}
    \mathcal S = \mathcal I + \mathcal P_{12} \mathcal P^\ast_{12}
\end{equation}
 with its transfer matrix
\begin{align}\label{eq::Stf}
    S(z)&\triangleq I + L\c{F}G_2G_2^\ast\c{F^\ast}L^\ast,
\end{align}
whose factorization will be used in the derivation of the solution to the Nehari problem. 
\subsection{Factorizations}\label{subsec:Factorization}
To be able to perform the required factorizations, we introduce the following assumption:
\begin{assumption}\label{ass:stab}
    Assume that ($F,H$) and ($F,L$) are detectable, ($F,G_1$) is stabilizable, and ($F,G_1$) and ($F,G_2$) are controllable on the unit circle.
\end{assumption}
For the rest of this paper, we rely on assumption \ref{ass:stab}, and its significance will become apparent when reviewing the proofs in the appendix. We now factorize the operator $\mathcal W$ \eqref{eq::opW} in Lemma \ref{lemma:factor_W}:
\begin{lemma}[Factorization of $\mathcal W$]\label{lemma:factor_W}
The transfer matrix \begin{equation}
    W^{-1}(z)=I + P_{21}(z) P_{21}^\ast(z^{-\ast})
\end{equation} can be factorized as
\begin{align}
    W^{-/2}(z)W^{-\ast/2}(z^{-\ast})&= I + H\c{F}G_1G_1^\ast \times \nn \\ &\ \ \ \ (z^{-1}I-F^\ast)^{-1}H^\ast
\end{align}
with 
\begin{align}
    W^{-/2}(z)&=(I + H \c{F}K_W) R_W^{1/2},
\end{align}
where 
\begin{align}
    K_W&= F P_U H^\ast R_W^{-1},\nn \quad  F_W= F-K_WH \\
    R_W&= I + H P_U H^\ast = R_W^{1/2}R_W^{\ast/2}, 
\end{align}
and $P_U$ is the stabilizing solution to the Riccati equation
\begin{align}
    P_U&= F P_U F^\ast + G_1G_1^\ast - K_W R_W K_W^\ast.
\end{align}
Finally,
\begin{align}
    W^{1/2}(z)&= R_W^{-/2}(I + H \c{F}K_W)^{-1} \nn \\
    &= R_W^{-/2}(I - H \c{F_W}K_W) \label{eq::Whalf} \\
    W^{\ast/2}(z^{-\ast})&= (I +K_W^\ast \a{F^\ast}H^\ast)^{-1}R_W^{-\ast/2},
\end{align}
and $W^{1/2}(z)$ is causal and stable (bounded on the unit circle).
\end{lemma}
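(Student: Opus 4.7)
The plan is to recognize $W^{-1}(z) = I + H\c{F}G_1G_1^\ast\a{F^\ast}H^\ast$ as the classical output spectrum of an observed state-space process and to invoke the Kalman innovations spectral factorization. That procedure produces a minimum-phase spectral factor of precisely the form $(I + H\c{F}K)R^{1/2}$ stated in the lemma, where $K$ and $R$ are determined by the stabilizing solution of a discrete-time algebraic Riccati equation (DARE).

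The first step is to introduce the stabilizing solution $P_U \succeq 0$ of the DARE
$$P_U = FP_UF^\ast + G_1G_1^\ast - FP_UH^\ast(I+HP_UH^\ast)^{-1}HP_UF^\ast.$$
Under Assumption~\ref{ass:stab}, the stabilizability of $(F,G_1)$ together with the detectability of $(F,H)$ guarantee existence and uniqueness of such $P_U$ and make $F_W \coloneqq F - K_W H$ strictly stable, with the unit-circle controllability hypothesis ruling out eigenvalues on the unit circle. With $K_W = FP_UH^\ast R_W^{-1}$ and $R_W = I + HP_UH^\ast$, the DARE rewrites compactly as
$$K_W R_W K_W^\ast = FP_UF^\ast + G_1G_1^\ast - P_U,$$
which is the substitution that will drive the factorization.

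The core step is to verify the identity
$$I + H\c{F}G_1G_1^\ast\a{F^\ast}H^\ast = (I + H\c{F}K_W)\,R_W\,(I + K_W^\ast\a{F^\ast}H^\ast)$$
by multiplying out the right-hand side. Using $K_W R_W = FP_UH^\ast$ and $R_W K_W^\ast = HP_UF^\ast$ together with the resolvent identities $\c{F}F = z\c{F} - I$ and $F^\ast\a{F^\ast} = z^{-1}\a{F^\ast} - I$, the two linear cross terms contribute $zH\c{F}P_UH^\ast - HP_UH^\ast$ and $z^{-1}HP_U\a{F^\ast}H^\ast - HP_UH^\ast$, respectively. For the quadratic term I substitute $K_W R_W K_W^\ast = FP_UF^\ast + G_1G_1^\ast - P_U$ from the DARE and expand $\c{F}FP_UF^\ast\a{F^\ast}$ through the same resolvent identity; the $z$- and $z^{-1}$-dependent cross terms then cancel pairwise, the constant pieces collapse via $R_W - HP_UH^\ast = I$, and the only surviving piece is exactly $H\c{F}G_1G_1^\ast\a{F^\ast}H^\ast$. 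This algebraic bookkeeping is the main technical obstacle, but it is routine once the DARE substitution is in place.

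Finally, since $F_W$ is strictly stable, the factor $W^{-/2}(z) = (I + H\c{F}K_W)R_W^{1/2}$ is causal and minimum-phase. Its inverse follows from the standard state-space matrix inversion identity
$$(I + H\c{F}K_W)^{-1} = I - H(zI - F + K_W H)^{-1}K_W = I - H\c{F_W}K_W,$$
which yields $W^{1/2}(z) = R_W^{-/2}(I - H\c{F_W}K_W)$, causal and bounded on the unit circle by $\rho(F_W) < 1$. The adjoint expression for $W^{\ast/2}(z^{-\ast})$ is then immediate by Hermitian transposition of $W^{-/2}(z)$.
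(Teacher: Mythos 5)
Your proof is correct and takes essentially the same route as the paper: the paper simply invokes its general canonical-factorization result (Lemma~\ref{lemma:aux2_factor} with $\underline{R}=I$, $\underline{H}=H$, $\underline{F}=F$, $\underline{G}=G_1$), which is exactly the innovations/Kalman spectral factorization you carry out, with the same DARE, gain $K_W$, stability of $F_W$ from the stabilizing solution, and inversion via the matrix-inversion lemma. The only difference is presentational: you verify the identity $(I+H\c{F}K_W)R_W(I+K_W^\ast\a{F^\ast}H^\ast)=I+H\c{F}G_1G_1^\ast\a{F^\ast}H^\ast$ inline by direct expansion with the resolvent identities (and your cancellation bookkeeping checks out), rather than citing the auxiliary lemma whose proof the paper defers to the completion-of-squares argument of Lemma~\ref{lemma:aux1_factor}.
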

Now, we present a series of helpful lemmas (Lemma \ref{lemma:factor_M}-\ref{lemma:factor_nabla}) that we use to get to the factorization of the operator $\mathcal M$ \eqref{eq::opM} in Lemma \ref{lemma:factor_M2}.

\begin{lemma}[Factorization of $\mathcal M$]\label{lemma:factor_M}
The operator $\mathcal M$ \eqref{eq::opM} can be directly factorized as
\begin{align}\label{eq::tfM}
        \mathcal M^{\ast/2} \mathcal M^{1/2}&= \gamma^{-2}(\mathcal I + \mathcal P_{12}^\ast \nabla^{-\ast}\nabla^{-1} \mathcal  P_{12}),
\end{align}
where $\mathcal M^{1/2}$ is a causal operator and $\Gamma,\nabla$ are causal operators obtained from
\begin{align}
    \Gamma^\ast \Gamma &= \gamma^{2} (I +\mathcal P_{21}^\ast \mathcal P_{21}) + \mathcal P_{11}^\ast \mathcal P_{11} \label{eq::gamma}\\
        \nabla \nabla^\ast &= \mathcal I - \mathcal P_{11} \Gamma^{-1}\Gamma^{-\ast} \mathcal P_{11}^\ast.\label{eq::nabla}
\end{align}
\end{lemma}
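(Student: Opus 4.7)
The plan is to verify the stated identity by a direct Woodbury-style manipulation of the definition \eqref{eq::opM} of $\mathcal M$, and then separately justify that the factors $\Gamma$, $\nabla$, and $\mathcal M^{1/2}$ really exist as \emph{causal} operators by invoking the canonical spectral factorization theorem under Assumption~\ref{ass:stab}. Let me set $B := \mathcal I + \mathcal P_{21}^\ast\mathcal P_{21}$ and $C := \gamma^{-2}\mathcal P_{11} B^{-1}\mathcal P_{11}^\ast$, so that \eqref{eq::gamma} reads $\Gamma^\ast\Gamma = \gamma^2 B + \mathcal P_{11}^\ast\mathcal P_{11}$ and the bracketed middle term inside \eqref{eq::opM} is precisely $\mathcal I + C$.

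Applying the matrix-inversion lemma to $\Gamma^\ast\Gamma = \gamma^2 B + \mathcal P_{11}^\ast \mathcal P_{11}$ and post/pre-multiplying by $\mathcal P_{11}$ and $\mathcal P_{11}^\ast$, I would obtain, after one short push-through simplification, the identity $\mathcal P_{11}(\Gamma^\ast\Gamma)^{-1}\mathcal P_{11}^\ast = C(\mathcal I + C)^{-1}$. Substituting into \eqref{eq::nabla} then gives $\nabla\nabla^\ast = \mathcal I - C(\mathcal I+C)^{-1} = (\mathcal I + C)^{-1}$, hence $\nabla^{-\ast}\nabla^{-1} = \mathcal I + C$. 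Plugging this back into \eqref{eq::opM} produces exactly
\begin{align*}
\mathcal M = \gamma^{-2}\mathcal I + \gamma^{-2}\mathcal P_{12}^\ast(\mathcal I + C)\mathcal P_{12} = \gamma^{-2}\bigl(\mathcal I + \mathcal P_{12}^\ast \nabla^{-\ast}\nabla^{-1}\mathcal P_{12}\bigr),
\end{align*}
which is the claimed identity \eqref{eq::tfM}.

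It then remains to argue that $\Gamma$, $\nabla$, and $\mathcal M^{1/2}$ are legitimate causal canonical spectral factors. Under Assumption~\ref{ass:stab}, the symbol $\gamma^2(\mathcal I + \mathcal P_{21}^\ast\mathcal P_{21}) + \mathcal P_{11}^\ast\mathcal P_{11}$ is rational, bounded, and uniformly positive definite on the unit circle (bounded below by $\gamma^2 I$), so the standard spectral factorization theorem yields a causal outer $\Gamma$ with a causal stable inverse. For $\nabla$, the symbol $\nabla\nabla^\ast = (\mathcal I + C)^{-1}$ is the inverse of a uniformly positive rational symbol and is therefore itself uniformly positive and bounded, so a causal outer factor exists. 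Finally $\mathcal M \succeq \gamma^{-2}\mathcal I$ is itself uniformly positive on the unit circle, yielding a causal $\mathcal M^{1/2}$.

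The main obstacle is not the algebra, which is a three-line Woodbury push-through, but rather the bookkeeping needed to ensure every inverse in the chain is a bona fide bounded Toeplitz operator and that each spectral factorization delivers a \emph{causal and stably invertible} factor (rather than merely an $L^2$-boundary-valued factor). This is precisely what the detectability of $(F,H)$ and $(F,L)$ together with the stabilizability/controllability hypotheses on the $G_i$ pairs buy us via the standard state-space canonical factorization result, so once Assumption~\ref{ass:stab} is invoked, the proof closes.
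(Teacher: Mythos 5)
Your proof is correct and takes essentially the same route as the paper's: a matrix-inversion-lemma (Woodbury) push-through that identifies $\nabla^{-\ast}\nabla^{-1}$ with $\mathcal I + \gamma^{-2}\mathcal P_{11}(\mathcal I+\mathcal P_{21}^\ast\mathcal P_{21})^{-1}\mathcal P_{11}^\ast$ through the canonical factorization $\Gamma^\ast\Gamma = \gamma^{2}(\mathcal I+\mathcal P_{21}^\ast\mathcal P_{21})+\mathcal P_{11}^\ast\mathcal P_{11}$, followed by substitution into \eqref{eq::opM}. The only difference is cosmetic: you justify existence and causality of the factors abstractly at this stage (be a bit careful that the symbols need not be bounded on the unit circle when $F$ has unimodular eigenvalues, which Assumption \ref{ass:stab} permits), whereas the paper simply asserts the canonical factorizations here and supplies the causal, Riccati-based state-space factors later in Lemmas \ref{lemma:factor_gamma}--\ref{lemma:factor_M2}.
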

We proceed to factorize $\Gamma^\ast \Gamma$ \eqref{eq::gamma} and $\nabla \nabla^\ast$\eqref{eq::nabla} in Lemmas \ref{lemma:factor_gamma},\ref{lemma:factor_nabla} respectively.
\begin{lemma}\label{lemma:factor_gamma}
The transfer matrix corresponding to \eqref{eq::gamma},\begin{equation}
    \Gamma^\ast (z^{-\ast}) \Gamma(z)=\gamma^{2} (I + P_{21}^\ast(z^{-\ast}) P_{21}(z)) + P^\ast_{11}(z^{-\ast}) P_{11}(z),
\end{equation}  can be factorized as $\Gamma^\ast(z^{-\ast}) \Gamma(z)$ with
\begin{align}
    \Gamma(z)&= R_\Gamma^{1/2} (I + K_\Gamma \c{F}G_1),
\end{align}
where
\begin{align}
    K_\Gamma&= R_\Gamma^{-1}G_1^\ast P_\Gamma F, \quad F_\Gamma= F - G_1 K_\Gamma \nn\\
    R_\Gamma&= \gamma^2 I + G_1^\ast P_\Gamma G_1 = R_\Gamma^{\ast/2} R_\Gamma^{1/2},
\end{align}
and $P_{\Gamma}$ is the stabilizing solution to the Riccati equation
\begin{align}
    P_\Gamma&= F^\ast P_\Gamma F   + (\gamma^2 H^\ast H + L^\ast L) - K_\Gamma^\ast R_\Gamma K_\Gamma.
\end{align}
\end{lemma}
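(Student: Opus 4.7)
The plan is to recognize this as a standard discrete-time spectral factorization that can be solved via a DARE. First I would substitute the state-space forms $P_{11}(z)=L(zI-F)^{-1}G_1$ and $P_{21}(z)=H(zI-F)^{-1}G_1$ into the left-hand side, which becomes
\begin{equation*}
\gamma^{2}I + G_1^{\ast}(z^{-1}I-F^{\ast})^{-1}\bigl(\gamma^{2}H^{\ast}H + L^{\ast}L\bigr)(zI-F)^{-1}G_1 \;=\; \gamma^{2}I + \Phi^{\ast}(z^{-\ast})\,Q\,\Phi(z),
\end{equation*}
where $\Phi(z)=(zI-F)^{-1}G_1$ and $Q=\gamma^{2}H^{\ast}H+L^{\ast}L\succeq0$. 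This is exactly the canonical input-to-state transfer function whose spectral factor is delivered by a DARE with dynamics $F$, input matrix $G_1$, state penalty $Q$, and constant $\gamma^{2}I$.

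Next I would verify the proposed factor $\Gamma(z)=R_\Gamma^{1/2}\bigl(I+K_\Gamma(zI-F)^{-1}G_1\bigr)$ by direct expansion of $\Gamma^{\ast}(z^{-\ast})\Gamma(z)$ into four pieces: the constant $R_\Gamma=\gamma^{2}I+G_1^{\ast}P_\Gamma G_1$, two linear cross terms containing $K_\Gamma^{\ast}R_\Gamma=F^{\ast}P_\Gamma G_1$, and the quadratic term $\Phi^{\ast}K_\Gamma^{\ast}R_\Gamma K_\Gamma\Phi$. The crucial algebraic move is to eliminate the inconvenient $G_1^{\ast}P_\Gamma G_1$ in $R_\Gamma$ by using $G_1=(zI-F)\Phi(z)$, which rewrites $G_1^{\ast}P_\Gamma G_1$ as the four bilinear pieces $\Phi^{\ast}P_\Gamma\Phi$, $\Phi^{\ast}F^{\ast}P_\Gamma F\Phi$, and the two mixed $z^{\pm1}$-weighted terms; the same identity expands the cross terms into the same basis. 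Substituting the Riccati equation in the form $K_\Gamma^{\ast}R_\Gamma K_\Gamma=F^{\ast}P_\Gamma F+Q-P_\Gamma$ then causes all $z^{\pm1}$-weighted terms and all $P_\Gamma$, $F^{\ast}P_\Gamma F$ terms to cancel telescopically, leaving precisely $\gamma^{2}I+\Phi^{\ast}(z^{-\ast})Q\Phi(z)$.

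Finally I would appeal to Assumption \ref{ass:stab} to guarantee the existence of the stabilizing solution and the structural properties demanded by the statement. Stabilizability of $(F,G_1)$ together with detectability of $(F,L)$, which implies detectability of $(F,Q^{1/2})$ since $Q\succeq L^{\ast}L$, is the classical pair of hypotheses ensuring that the DARE admits a unique positive semidefinite solution $P_\Gamma$ with $F_\Gamma=F-G_1K_\Gamma$ Schur-stable. This in turn gives $R_\Gamma\succ0$, a well-defined $R_\Gamma^{1/2}$, and the causal, unit-circle-bounded inverse $\Gamma^{-1}(z)=\bigl(I-K_\Gamma(zI-F_\Gamma)^{-1}G_1\bigr)R_\Gamma^{-1/2}$ via the matrix inversion lemma, certifying $\Gamma$ as the canonical (outer) factor. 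The main obstacle is not the verification itself, which is mechanical, but the bookkeeping of the $z^{\pm1}$ cross-term cancellations and correctly invoking the DARE theory on the detectability of $Q^{1/2}$, rather than $L$ directly, to obtain the stabilizing solution.
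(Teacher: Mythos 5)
Your proposal is correct and follows essentially the same route as the paper: the paper likewise rewrites the sum as $\gamma^{2}I + G_1^\ast(z^{-1}I-F^\ast)^{-1}(\gamma^{2}H^\ast H + L^\ast L)(zI-F)^{-1}G_1$ and then invokes its general canonical-factorization result (Lemma~\ref{lemma:aux1_factor} with $\underline R=\gamma^2 I$, $\underline H=(\gamma^2H^\ast H+L^\ast L)^{1/2}$, $\underline F=F$, $\underline G=G_1$), whose proof is exactly the completion-of-squares/telescoping verification you carry out by hand. Your additional remark that detectability of $(F,L)$ yields detectability of $(F,Q^{1/2})$ and hence a stabilizing DARE solution is a valid (indeed slightly more careful) justification of the hypothesis the paper handles via Assumption~\ref{ass:stab}.
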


\begin{lemma}\label{lemma:factor_nabla}
The transfer matrix corresponding to \eqref{eq::nabla}, \begin{equation}
    \nabla(z)\nabla^\ast(z^{-\ast})=I -  P_{11}(z) \Gamma^{-1}\Gamma^{-\ast}  P_{11}^\ast(z^{-\ast}),
\end{equation}can be factorized as
\begin{align}
 \nabla(z) \nabla^\ast(z^{-\ast}) &= I - G(z) \Gamma^{-1}(z)\Gamma^{-\ast}(z^{-\ast}) G^\ast(z^{-\ast}),
\end{align}
with
\begin{align}
    \nabla(z)&= (I + L \c{F_\Gamma}K_\nabla) R_\nabla^{1/2},
\end{align}
where 
\begin{align}
    K_\nabla&= F_\Gamma P_\nabla L^\ast R_\nabla^{-1},\nn \quad F_\nabla= F_\Gamma - K_\nabla L\\
    R_\nabla&= I + L P_\nabla L^\ast = R_\nabla^{1/2} R_\nabla^{\ast/2},    
\end{align}
and $P_\nabla$ is the stabilizing solution to the Riccati equation
\begin{align}
    P_\nabla&= F_\Gamma P_\nabla F_\Gamma^\ast - G_1R_\Gamma^{-1}G_1^\ast - K_\nabla R_\nabla K_\nabla^\ast.
\end{align}
\end{lemma}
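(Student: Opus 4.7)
The plan is to first collapse the cascade $P_{11}(z)\Gamma^{-1}(z)$ into a single state-space realization, and then carry out a Riccati-based spectral factorization of $I$ minus its squared norm.

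First, using the state-space form of $\Gamma$ from Lemma \ref{lemma:factor_gamma} together with the matrix-inversion lemma, I would write $\Gamma^{-1}(z) = (I - K_\Gamma(zI-F_\Gamma)^{-1}G_1)R_\Gamma^{-1/2}$ and verify the push-through identity
\begin{equation*}
(zI-F)^{-1}G_1\bigl(I - K_\Gamma(zI-F_\Gamma)^{-1}G_1\bigr) = (zI-F_\Gamma)^{-1}G_1,
\end{equation*}
which is a one-line algebraic consequence of $zI - F_\Gamma = (zI-F) + G_1 K_\Gamma$. This collapses the cascade to $P_{11}(z)\Gamma^{-1}(z) = L(zI-F_\Gamma)^{-1}G_1 R_\Gamma^{-1/2}$, so that the identity to be factorized becomes
\begin{equation*}
\nabla(z)\nabla^{\ast}(z^{-\ast}) = I - L(zI-F_\Gamma)^{-1}G_1 R_\Gamma^{-1} G_1^{\ast}(z^{-\ast}I-F_\Gamma^{\ast})^{-1}L^{\ast}.
\end{equation*}

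Next, I would postulate the factor $\nabla(z) = (I + L(zI-F_\Gamma)^{-1}K_\nabla)R_\nabla^{1/2}$ and verify the identity by direct expansion. Multiplying out $\nabla(z)\nabla^{\ast}(z^{-\ast})$ produces the constant $R_\nabla$, two linear resolvent terms involving $K_\nabla R_\nabla = F_\Gamma P_\nabla L^{\ast}$ and its adjoint, and a quadratic term whose kernel, after using the Riccati equation to rewrite $K_\nabla R_\nabla K_\nabla^{\ast} + G_1 R_\Gamma^{-1}G_1^{\ast}$, equals $F_\Gamma P_\nabla F_\Gamma^{\ast} - P_\nabla$. Applying the resolvent identities $(zI-F_\Gamma)^{-1}F_\Gamma = z(zI-F_\Gamma)^{-1} - I$ and its adjoint, all $z$- and $z^{-\ast}$-dependent cross-terms cancel pairwise, the constants collapse via $R_\nabla = I + LP_\nabla L^{\ast}$, and the only leftover is a term proportional to $zz^{-\ast}-1$, which vanishes on $|z|=1$. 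This is enough to conclude equality of the corresponding Toeplitz operators.

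Causality of $\nabla(z)$ is then immediate from the $(zI-F_\Gamma)^{-1}$ form, while its boundedness on the unit circle is inherited from the stability of $F_\Gamma$ guaranteed by the stabilizing solution $P_\Gamma$ of Lemma \ref{lemma:factor_gamma}. Causal stability of $\nabla^{-1}(z)$, which rewrites as $R_\nabla^{-/2}(I - L(zI-F_\nabla)^{-1}K_\nabla)$ after one application of the matrix-inversion lemma, is precisely the ``stabilizing'' condition that $F_\nabla = F_\Gamma - K_\nabla L$ be stable. The hardest step will be establishing existence of such a stabilizing $P_\nabla$ with $R_\nabla \succ 0$: the ``process noise'' term $-G_1 R_\Gamma^{-1}G_1^{\ast}$ enters with the wrong sign, so standard LQR/positive-definite DARE theorems do not apply; existence must instead be argued through indefinite (J-spectral) Riccati results, leveraging Assumption \ref{ass:stab} (detectability of $(F,L)$ and controllability of $(F,G_1)$ on the unit circle) together with positivity of $I - (P_{11}\Gamma^{-1})(P_{11}\Gamma^{-1})^{\ast}$ on $|z|=1$, which itself descends from the $\gamma$-feasibility established in Theorem \ref{th:reduction}.
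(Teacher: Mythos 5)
Your proposal follows essentially the same route as the paper: the paper likewise first collapses $G(z)\Gamma^{-1}(z)$ to $L(zI-F_\Gamma)^{-1}G_1R_\Gamma^{-/2}$ via the same push-through identity and then factorizes $I - L(zI-F_\Gamma)^{-1}G_1R_\Gamma^{-1}G_1^\ast(z^{-1}I-F_\Gamma^\ast)^{-1}L^\ast$, the only difference being that it invokes the Corollary to Lemma~\ref{lemma:aux2_factor} (with $\underline R=I$, $\underline H=L$, $\underline F=F_\Gamma$, $\underline G=-G_1R_\Gamma^{-/2}$) where you carry out the Riccati-based verification and the stabilizing-solution discussion by hand. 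One small correction: the positivity of $I-(P_{11}\Gamma^{-1})(P_{11}\Gamma^{-1})^\ast$ on the unit circle does not rest on the $\gamma$-feasibility of Theorem~\ref{th:reduction}; by the construction in the proof of Lemma~\ref{lemma:factor_M} it equals $(I+\gamma^{-2}P_{11}(I+P_{21}^\ast P_{21})^{-1}P_{11}^\ast)^{-1}$, which is positive definite for every $\gamma>0$, so your existence argument for the stabilizing $P_\nabla$ can dispense with that appeal.
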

Now, we get to the desired factorization of $\mathcal M$ \eqref{eq::opM}:
\begin{lemma}\label{lemma:factor_M2}
The transfer matrix corresponding to \eqref{eq::tfM},\begin{equation}
    M(z) = \gamma^{-2}(I + P_{12}^\ast(z^{-\ast}) \nabla^{-\ast}(z^{-\ast})\nabla^{-1}(z) P_{12}(z)),
\end{equation} can be factored as $M^{\ast/2}(z^{-\ast}) M^{1/2}(z)$ with
\begin{align}
    M^{1/2}(z)&= R_M^{1/2} (I + K_M \c{F_E}G_E),
\end{align}

where
\begin{align}
    K_M&= R_M^{-1}G_E^\ast P_M F_E\nn \quad F_M= F_E - G_E K_M\\
    R_M &= \gamma^{-2}I + G_E^\ast P_M G_E = R_M^{\ast/2} R_M^{1/2},    
\end{align}
and $P_M$ is a solution to the Riccati equation
\begin{align}
    P_M&= F_E^\ast P_M F_E   + H_E^\ast H_E - K_M^\ast R_M K_M.
\end{align}
The constants are defined as
\begin{align}
    H_E &\triangleq \begin{pmatrix}
    R_\nabla^{-/2}L &R_\nabla^{-/2}L 
    \end{pmatrix}\nn\\
    F_E &\triangleq \begin{pmatrix}
F&0\\
- K_\nabla L & F_\nabla
\end{pmatrix}, \quad
G_E \triangleq 
    \begin{pmatrix}
    \gamma^{-1}G_2\\
    0 
    \end{pmatrix}.
\end{align}
Also,
\begin{align}
    M^{-1/2}(z)&= (I + K_M \c{F_E}G_E)^{-1}R_M^{-/2}\label{eq::thetaInv}
\end{align}
\end{lemma}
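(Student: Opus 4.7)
My plan is to recast $M(z)$ as a constant-plus-inner-product rational spectrum built from a single causal cascade, and then invoke the standard discrete-time DARE-based spectral factorization. The key observation is that everything to the right of $\gamma^{-2}I$ inside $M(z)$ has the form $E^\ast(z^{-\ast})E(z)$ for the causal operator $E(z)=\gamma^{-1}\nabla^{-1}(z)P_{12}(z)$; once $E$ is realized in state space, the factorization is a textbook DARE computation and the lemma's formulas fall out directly.

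First, I would apply the matrix inversion lemma to Lemma~\ref{lemma:factor_nabla} to obtain the causal realization
$$
\nabla^{-1}(z)=R_\nabla^{-/2}\bigl(I-L\c{F_\nabla}K_\nabla\bigr),
$$
and then cascade it with $P_{12}(z)=L\c{F}G_2$ on the augmented state $(x,\xi)$, relabeling the second block by its negative so that signs match the lemma. The resulting realization has state matrix $F_E$, output matrix $H_E$, and input matrix $(G_2;0)$. Absorbing the prefactor $\gamma^{-1}$ into the input as $G_E=(\gamma^{-1}G_2;0)$, the cascade becomes $E(z)\triangleq H_E\c{F_E}G_E$, and therefore
$$
M(z)=\gamma^{-2}I+E^\ast(z^{-\ast})E(z)=\gamma^{-2}I+G_E^\ast(z^{-\ast}I-F_E^\ast)^{-1}H_E^\ast H_E\c{F_E}G_E.
$$

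Next, I would apply the canonical discrete-time spectral factorization to this spectrum. The DARE in the statement is exactly the one associated with the triple $(F_E,G_E,H_E)$ and constant term $\gamma^{-2}I$; using the Riccati identity $H_E^\ast H_E=P_M-F_E^\ast P_M F_E+K_M^\ast R_M K_M$ together with the definitions of $R_M$ and $K_M$, a direct expansion verifies
$$
\bigl(I+G_E^\ast(z^{-\ast}I-F_E^\ast)^{-1}K_M^\ast\bigr)R_M\bigl(I+K_M\c{F_E}G_E\bigr)=M(z),
$$
so $M^{1/2}(z)=R_M^{1/2}\bigl(I+K_M\c{F_E}G_E\bigr)$ is the claimed causal square root. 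The expression~\eqref{eq::thetaInv} for $M^{-1/2}$ then follows by inverting this state-space form; the matrix inversion lemma gives the equivalent realization $(I-K_M(zI-F_M)^{-1}G_E)R_M^{-/2}$ with $F_M=F_E-G_E K_M$.

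The hardest part is guaranteeing existence of the stabilizing Riccati solution so that $M^{1/2}$ is causal and $M^{-1/2}$ is bounded. Because $F_E$ is block-triangular with diagonal blocks $F$ and $F_\nabla$, and $F$ itself may be unstable, the required canonical structure has to come from stability of $F_M=F_E-G_E K_M$. Verifying this amounts to translating Assumption~\ref{ass:stab}—especially detectability of $(F,L)$ and controllability of $(F,G_2)$ on the unit circle—into detectability of $(F_E,H_E)$ and stabilizability of $(F_E,G_E)$. Both reductions are essentially mechanical thanks to the cascade structure (left-eigenvectors of $F_E$ that annihilate $G_E$ live entirely in the $F_\nabla$ block, whose spectrum is already stable from Lemma~\ref{lemma:factor_nabla}), but this is the step that must be made explicit for the standard DARE existence theorem to apply and for the factorization to be declared canonical.
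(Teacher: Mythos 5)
Your proposal is correct and follows essentially the same route as the paper: realize $\gamma^{-1}\nabla^{-1}(z)P_{12}(z)$ in state space as $H_E(zI-F_E)^{-1}G_E$ via the cascade/push-through of the $\nabla^{-1}$ realization with $P_{12}(z)=L(zI-F)^{-1}G_2$, then apply the standard DARE-based canonical factorization (the paper's auxiliary Lemma on UL factorization) with $\underline R=\gamma^{-2}I$, and obtain $M^{-1/2}$ by the matrix inversion lemma. The only difference is that you make explicit the stabilizability/detectability bookkeeping for $(F_E,G_E,H_E)$ under Assumption~\ref{ass:stab}, which the paper leaves implicit when invoking its factorization lemma.
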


Finally, we factorize the operator $\mathcal{S}$ \eqref{eq::opS} in the following lemma that we leverage later in Section \ref{subsec:decomposition} to find the Nehari solution.
\begin{lemma}[Factorization of $\mathcal S$]\label{lemma:factorization_S}
The transfer matrix $S(z)$ \eqref{eq::Stf} can be factorized as $S^{1/2}(z)S^{\ast/2}(z^{-\ast})$
with 
\begin{align}
    S^{1/2}(z)&=(I + L \c{F}K_S) R_S^{1/2},
\end{align}
where 
\begin{align}
    K_S&= F P_S L^\ast R_S^{-1}, \quad F_S= F-K_SL,\nn\\
    R_S&= I + L P_S L^\ast = R_S^{1/2}R_S^{\ast/2},
\end{align}
and $P_S$ is the stabilizing solution to the Riccati equation
\begin{align}
    P_S&= F P_S F^\ast + G_2G_2^\ast - K_S R_S K_S^\ast.
\end{align}
Finally,
\begin{align}
    S^{-/2}(z)&= R_S^{-/2}(I + L \c{F}K_S)^{-1}
\end{align}
and $S^{-/2}(z) = (S^{1/2}(z))^{-1}$ is causal and stable (bounded on the unit circle).
\end{lemma}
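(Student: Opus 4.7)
The plan is to recognize Lemma \ref{lemma:factorization_S} as the canonical (innovations) spectral factorization of the output of the fictitious stochastic system
\begin{align*}
    x_{i+1} &= Fx_i + G_2\eta_i, \\
    y_i &= Lx_i + v_i,
\end{align*}
where $\eta_i$ and $v_i$ are independent unit-covariance white noises. The output spectral density of this system is exactly $S(z) = I + L(zI-F)^{-1} G_2 G_2^{\ast}(z^{-\ast}I - F^{\ast})^{-1}L^{\ast}$, and the steady-state Kalman filter supplies a causal, causally invertible factor in closed form. Structurally, the statement is identical to Lemma \ref{lemma:factor_W} with $(G_1, H)$ replaced by $(G_2, L)$, so I would follow the same template essentially verbatim.

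First, I would invoke Assumption \ref{ass:stab}: the controllability of $(F, G_2)$ on the unit circle together with the detectability of $(F, L)$ guarantees the existence of a unique stabilizing $P_S \succeq 0$ solving the Riccati equation in the statement, meaning that $F_S = F - K_S L$ has spectral radius strictly less than one. I would then define $R_S = I + L P_S L^{\ast}$, $K_S = F P_S L^{\ast} R_S^{-1}$, and a square root $R_S^{1/2}$ satisfying $R_S = R_S^{1/2} R_S^{\ast/2}$ exactly as in the statement.

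The core step is the algebraic verification
\begin{align*}
    (I + L\c{F}K_S)\, R_S\, (I + K_S^{\ast}(z^{-\ast}I-F^{\ast})^{-1}L^{\ast})
    = I + L\c{F}G_2G_2^{\ast}(z^{-\ast}I-F^{\ast})^{-1}L^{\ast}.
\end{align*}
I would prove this by expanding the product, substituting $K_S R_S = F P_S L^{\ast}$ (immediate from the definition of $K_S$) and the Riccati equation in the form $G_2 G_2^{\ast} = P_S - F P_S F^{\ast} + K_S R_S K_S^{\ast}$. The cleanest bookkeeping is to multiply the claimed identity on the left by $(zI-F)$ and on the right by $(z^{-\ast}I-F^{\ast})$, at which point the rational identity collapses to a plain matrix identity equivalent to the Riccati equation, sidestepping any further resolvent manipulation.

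For the inverse formula $S^{-/2}(z) = R_S^{-/2}(I + L\c{F}K_S)^{-1}$, I would apply the matrix inversion lemma to obtain the equivalent state-space form $R_S^{-/2}(I - L\c{F_S}K_S)$, whose poles are the eigenvalues of $F_S$. Causality is then immediate from this state-space realization, and stability (poles strictly inside the unit disk) follows directly from the stabilizing property of $P_S$ established above. The main obstacle is the opening algebraic verification, which is mechanical but must be organized carefully to avoid clutter; the remainder of the argument parallels the proof of Lemma \ref{lemma:factor_W} almost verbatim, since $\mathcal{S}$ shares the same structural form.
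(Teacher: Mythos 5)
Your proposal is correct and follows essentially the same route as the paper: the paper's proof is a one-line application of Lemma~\ref{lemma:aux2_factor} with $\underline R=I$, $\underline H=L$, $\underline F=F$, $\underline G=G_2$, and your Riccati-based verification, matrix-inversion-lemma inverse, and stability argument are exactly the content of that auxiliary lemma specialized to $\mathcal S$ (mirroring Lemma~\ref{lemma:factor_W}, just as you note). The only difference is that you re-derive the canonical factorization directly rather than citing the appendix lemma, which is a matter of presentation, not substance.
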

\subsection{Decomposition}\label{subsec:decomposition}
Here in Lemma \ref{lemma:decomposition}, we decompose $X(z)$ \eqref{eq::tfX} into its causal and strictly anti-causal parts, to be used in solving the Nehari problem \eqref{eq:th_reduction_Nehari} as we will see in Lemma \ref{lemma:nehariSol}.
\begin{lemma}\label{lemma:decomposition}
The transfer matrix $X(z)$ \eqref{eq::tfX} can be expressed as a sum of two causal and a strictly anti-causal transfer matrices given by
 \begin{align}
     C_1(z)&= - M^{1/2}(z) G_2^\ast U_1F_S [\c{F_S}F_SU_2 + U_2]\times \nn \\ &\ \ \ \ H^\ast R_W^{-\ast/2} \label{eq::C1}\\
     C_2(z)&= - R_M^{1/2} K_M [\c{F_E}F_EU_3 + U_3]G_A\label{eq::C2}\\
 A(z)&=  H_A\a{F_A^\ast}G_A \label{eq::A},
 \end{align}
 where $U_1,U_2,U_3$ are given as the solution to the Lyapunov/Sylvester equations 
 \begin{align}\label{eq:Lyapunov}
U_1&= L^\ast R_S^{-1}L + F_S^\ast U_1 F_S \nn\\
U_2&= G_1G_1^\ast + F_S U_2 F_W^\ast \nn\\
U_3&= F_EU_3F_A^\ast+ G_E\tilde{H}_A
\end{align}
with \begin{equation}
    \tilde{H}_A = G_2^\ast \begin{pmatrix}
        U_1U_2&F_S^\ast 
    \end{pmatrix},
\end{equation} and the constants for the anti-causal function are
\begin{align}
    H_A &= - R_M^{1/2} (\tilde{H}_A +  K_M U_3F_A^\ast), \nn\\
    F_A^\ast&= \begin{pmatrix}
        F_W^\ast & 0\\
        U_1G_1G_1^\ast & F_S^\ast
    \end{pmatrix}, \quad G_A = \begin{pmatrix}
    H^\ast R_W^{-\ast/2}\\
    0 
    \end{pmatrix}.
\end{align}

\end{lemma}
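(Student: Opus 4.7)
My plan is to reduce $X(z)$ to a transparent product of four state-space factors whose causal and anti-causal characters are explicit, and then strip off the causal and strictly anti-causal pieces by three Sylvester/Lyapunov operations that I will argue are precisely the equations defining $U_1, U_2, U_3$.

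The first step is algebraic simplification. Using the matrix-inversion identity $(I+P_{12}^\ast P_{12})^{-1}P_{12}^\ast = P_{12}^\ast(I+P_{12}P_{12}^\ast)^{-1} = P_{12}^\ast S^{-1}$, the relation $(I+P_{21}P_{21}^\ast)^{-1} = W$, together with $W = W^{\ast/2}W^{1/2}$ and $W^{-/2} = (W^{1/2})^{-1}$, I rewrite $Q_2 W^{-/2}$ as $-P_{12}^\ast S^{-1} P_{11} P_{21}^\ast W^{\ast/2}$. Applying the canonical factorization $S = S^{1/2}S^{\ast/2}$ from Lemma~\ref{lemma:factorization_S} then yields
\begin{equation*}
X = -\,M^{1/2}\cdot\bigl(P_{12}^\ast\, S^{-\ast/2}\bigr)\cdot\bigl(S^{-/2}\, P_{11}\bigr)\cdot\bigl(P_{21}^\ast\, W^{\ast/2}\bigr),
\end{equation*}
in which $M^{1/2}$ and $S^{-/2}P_{11}$ are causal while $P_{12}^\ast S^{-\ast/2}$ and $P_{21}^\ast W^{\ast/2}$ are anti-causal, so the decomposition problem reduces to resolving two internal causal/anti-causal interfaces.

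The core step is to resolve these interfaces using the standard spectral trick: for a product $\c{A}\, Q\, \a{B^\ast}$ with $A$ and $B$ Schur-stable, the unique $U$ satisfying $U - A U B^\ast = Q$ yields the decomposition into a causal part $(z\c{A})U = \c{A}AU + U$ plus an anti-causal remainder supported on $\a{B^\ast}$. I apply this recipe three times. (i) After writing $S^{-/2}P_{11}$ via Lemmas~\ref{lemma:factor_M2} and~\ref{lemma:factorization_S}, the Lyapunov equation $U_1 = F_S^\ast U_1 F_S + L^\ast R_S^{-1} L$ consolidates the $\c{F}$ and $\c{F_S}$ dynamics onto the single basis $F_S$. (ii) Pairing the resulting causal factor with the anti-causal $P_{21}^\ast W^{\ast/2}$, the Sylvester equation $U_2 = G_1G_1^\ast + F_S U_2 F_W^\ast$ produces the causal bracket $\c{F_S}F_S U_2 + U_2$ that appears in $C_1$, together with an anti-causal residue with input $H^\ast R_W^{-\ast/2}$. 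Absorbing the remaining anti-causal factor $P_{12}^\ast S^{-\ast/2}$ collects all anti-causal dynamics into a single realization on the block matrix $F_A^\ast$ with input $G_A$ and intermediate output $\tilde{H}_A$. (iii) Multiplying on the left by $M^{1/2}(z) = R_M^{1/2}(I + K_M\c{F_E}G_E)$ gives two contributions: the feedthrough $R_M^{1/2}$ combines with the causal bracket from (ii) to yield $C_1$, while the dynamic term $R_M^{1/2}K_M\c{F_E}G_E$ creates a final causal/anti-causal interface resolved by the Sylvester equation $U_3 = F_E U_3 F_A^\ast + G_E \tilde{H}_A$; its causal side yields $C_2 = -R_M^{1/2}K_M[\c{F_E}F_E U_3 + U_3]G_A$ and its anti-causal side updates the output matrix to $H_A = -R_M^{1/2}(\tilde{H}_A + K_M U_3 F_A^\ast)$, completing the stated $A(z) = H_A\,\a{F_A^\ast}\,G_A$.

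The main obstacle will be the matrix bookkeeping: three Sylvester decompositions applied to nested products of four state-space realizations generate several block-structured state matrices, and every direct-feedthrough and dynamic coefficient must be tracked carefully to obtain the precise block-triangular form of $F_A^\ast$, the stated expression for $\tilde{H}_A$, and the correct sign and coefficient in $H_A$. Assumption~\ref{ass:stab} is used throughout to guarantee that $F_S$, $F_W$, $F_E$, and $F_A$ are Schur-stable, so the Sylvester equations have unique solutions and the resulting causal and anti-causal factors are bounded on the unit circle.
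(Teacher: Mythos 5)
Your proposal is correct and takes essentially the same route as the paper's proof: it rewrites $X(z)$ as $-M^{1/2}(z)\,P_{12}^\ast(z^{-\ast})S^{-\ast/2}(z^{-\ast})\,S^{-/2}(z)P_{11}(z)\,P_{21}^\ast(z^{-\ast})W^{\ast/2}(z^{-\ast})$ via the factorization of $\mathcal S$, and then peels off causal pieces at each causal/anti-causal interface with the same three Lyapunov/Sylvester equations for $U_1,U_2,U_3$, collecting the anti-causal remainder into $H_A\a{F_A^\ast}G_A$. Only two descriptive details are slightly off (they do not affect the argument): the consolidation of the $\c{F}$ and $\c{F_S}$ dynamics comes from the pairwise state-space products such as $S^{-/2}(z)P_{11}(z)=R_S^{-/2}L\c{F_S}G_1$, while the Lyapunov equation for $U_1$ actually resolves the $\a{F_S^\ast}L^\ast R_S^{-1}L\c{F_S}$ interface; and $C_1$ carries the full causal factor $M^{1/2}(z)$ (the dynamic part of $M^{1/2}$ times the causal bracket remains causal), not just the feedthrough $R_M^{1/2}$.
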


\subsection{Final Solution}\label{subsec:finalsol}
In Lemma \ref{lemma:nehariSol}, we employ Theorem 5 from \cite{sabag2021regret} to determine the optimal causal operator $\mathcal C_N$. This operator serves as the solution to the Nehari problem \eqref{eq:th_reduction_Nehari}, which can also be expressed in terms of transfer matrices as follows:
\begin{align}\label{eq:Nehari_tf}
    \inf_{\text{causal} \ C(z)} \|  C(z) - \{X(z)\}_- \|\le 1,
\end{align}
with  
\begin{equation}
    \{X(z)\}_{-}=A(z) = H_A\a{F_A^\ast}G_A.
\end{equation}
\begin{lemma}\label{lemma:nehariSol}
The solution to the Nehari problem \eqref{eq:Nehari_tf} is 
\begin{align}\label{eq::CN}
    C_N(z)&= H_A \Pi(F_N \c{F_N} +I) K_N,
\end{align}
where
\begin{align}
    K_N &= (I - F_A Z F_A^\ast\Pi)^{-1}F_A Z G_A, \quad F_N = F_A - K_NG_A^\ast,
\end{align}
and $Z$ and $\Pi$ are the solutions to the Lyapunov equations
\begin{align}
    Z&= F_A Z F_A^\ast + H_A^\ast H_A\nn \\
    \Pi&= F_A^\ast \Pi F_A + G_AG_A^\ast.
\end{align}
\end{lemma}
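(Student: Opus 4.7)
The plan is to apply Theorem~5 of \cite{sabag2023regretoptimal}, which gives the state-space solution to the Nehari approximation problem, directly to the strictly anti-causal operator $A(z) = H_A(z^{-1}I - F_A^*)^{-1}G_A$ identified in Lemma~\ref{lemma:decomposition}. That theorem provides both a feasibility condition (in terms of the controllability and observability Gramians of the realization) and an explicit closed-form optimal causal approximant $C_N(z)$ expressed through those Gramians and the matrix $(I - F_A Z F_A^* \Pi)^{-1}$. Since $A(z)$ is already handed to us in state-space form by the previous lemma, the proof reduces to verifying the hypotheses of the cited theorem and specializing its formula.

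First I would verify that the realization $(F_A^*, G_A, H_A)$ satisfies the hypotheses of the Nehari theorem. Stability of $F_A$ follows from the block-lower-triangular structure of $F_A^*$ derived in Lemma~\ref{lemma:decomposition}, whose diagonal blocks are $F_W^*$ and $F_S^*$; both $F_W$ and $F_S$ are stable by the Riccati-based constructions in Lemmas~\ref{lemma:factor_W} and~\ref{lemma:factorization_S}, which in turn rely on Assumption~\ref{ass:stab}. Consequently, the Lyapunov equations
\begin{align*}
Z &= F_A Z F_A^* + H_A^* H_A,\\
\Pi &= F_A^* \Pi F_A + G_A G_A^*
\end{align*}
have unique positive semidefinite solutions representing the observability and controllability Gramians of $A$. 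The feasibility inequality \eqref{eq:Nehari_tf} then reduces, via the standard Hankel-norm characterization of Nehari, to the spectral condition $\rho(Z\Pi) \le 1$, which in turn guarantees invertibility of $I - F_A Z F_A^* \Pi$ and hence well-posedness of the definition of $K_N$.

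With these preliminaries in place, it remains to substitute $(F_A, G_A, H_A, Z, \Pi)$ into the explicit formula of Theorem~5 in \cite{sabag2023regretoptimal} and simplify. This yields a state-space realization for the optimal causal approximant with dynamics matrix $F_N = F_A - K_N G_A^*$, input matrix $K_N = (I - F_A Z F_A^* \Pi)^{-1} F_A Z G_A$, and output matrix $H_A \Pi$. The compact form \eqref{eq::CN} then follows from the algebraic identity $F_N(zI-F_N)^{-1} + I = z(zI-F_N)^{-1}$ together with a rearrangement of the direct-feedthrough term so that the constant piece is absorbed into $H_A\Pi\,K_N$.

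The main obstacle I expect is bookkeeping rather than conceptual difficulty: the realization $(F_A^*, G_A, H_A)$ delivered by Lemma~\ref{lemma:decomposition} is not a priori minimal, so matching its Lyapunov solutions to the Gramians used in \cite{sabag2023regretoptimal} requires checking positivity and the induced stabilizability/detectability conditions before the algebraic simplification can be carried out cleanly. Once these are in hand, however, the identification of $C_N(z)$ with the expression in the lemma is a direct substitution.
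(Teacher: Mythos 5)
Your proposal is correct and follows essentially the same route as the paper: the paper's own proof simply invokes Theorem~5 of \cite{sabag2023regretoptimal} applied to the strictly anti-causal part $A(z)=H_A(z^{-1}I-F_A^\ast)^{-1}G_A$ from Lemma~\ref{lemma:decomposition}, exactly as you do. Your additional verification of the hypotheses (stability of $F_A$ via the block-triangular structure and the stabilizing Riccati solutions under Assumption~\ref{ass:stab}, existence of the Gramians $Z,\Pi$, and invertibility of $I-F_AZF_A^\ast\Pi$) is consistent with, and somewhat more explicit than, the paper's one-line citation.
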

 Recall from Theorem \ref{th:reduction} that the Youla parameterization of the causal regret-optimal controller is $\mathcal Q_c $ \eqref{eq::optCont} and thus its transfer matrix can be written as
\begin{align}\label{eq:newQc}
    Q_{c}(z)&= M^{-/2}(z)(\{X(z)\}_{+} + C_N(z))W^{1/2}(z)\nn\\
    &=M^{-/2}(z)(C_1(z)+C_2(z) + C_N(z))W^{1/2}(z)
\end{align}
with $C_1(z)$ \eqref{eq::C1}, $C_2(z)$ \eqref{eq::C2}  are as found by Lemma \ref{lemma:decomposition}, and $C_N(z)$ \eqref{eq::CN} is the solution to the Nehari problem as found by Lemma \ref{lemma:nehariSol}. $M^{-/2}(z)$ \eqref{eq::thetaInv} is as found by Lemma \ref{lemma:factor_M2}, and $W^{1/2}(z)$ \eqref{eq::Whalf} is as found by Lemma \ref{lemma:factor_W}. 

With these components, we can now present our main result: in Theorem \ref{th:SS_causal}, we establish a state-space formulation for the regret-optimal controller $Q_c$ \eqref{eq:newQc} which, according to Theorem \ref{th:reduction}, solves Problem \ref{pb::RO}:
\begin{theorem}\label{th:SS_causal}
Under the condition of Assumption \ref{ass:stab}, the causal regret-optimal controller corresponding to $Q_c(z)$ \eqref{eq:newQc} which solves Problem \ref{pb::RO} suboptimally is:
\begin{align}
    \xi_{i+1} &= A \xi_i + B y_i\nn \\
    q_i  &= C \xi_i + D y_i
\end{align}
where:
\begin{align}
    A&= \begin{pmatrix}
    F_W & 0 & 0 &0 \\
    -K_NR_W^{-/2}H & F_N&0&0 \\
    - U_3 G_AR_W^{-/2}H & G_ER_M^{-/2}H_A \Pi& F_M&0 \\
    -U_2 H^\ast R_W^{-1}H & 0& 0& F_S
    \end{pmatrix}\nn\\
    B &= \begin{pmatrix}
     F_WK_W\\
     -K_NR_W^{-/2}H K_W+  F_N K_NR_W^{-/2}\\
B_3\\
      -U_2 H^\ast R_W^{-1}H K_W + F_S U_2 H^\ast R_W^{-1}
    \end{pmatrix} \nn \\
        C &=     \begin{pmatrix}
    0& R_{M}^{-/2}H_A\Pi& - K_M & -G_2^\ast U_1 F_S
    \end{pmatrix}\nn\\
    D&= [R_{M}^{-/2}H_A\Pi K_N   - K_M U_3 G_A \nn \\ & \ \ \ -G_2^\ast U_1 F_S U_2 H^\ast R_W^{-\ast/2}] R_W^{-/2}.
\end{align}
with 
\begin{align}
  B_3 &\triangleq - U_3 G_AR_W^{-/2}H K_W +  G_ER_M^{-/2}H_A \Pi K_NR_W^{-/2} \nn\\
&\ \ \ +  F_M U_3 G_AR_W^{-/2}  
\end{align}
\end{theorem}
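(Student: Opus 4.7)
The plan is to derive the state-space model by realizing each factor in
\[
Q_c(z) \;=\; M^{-/2}(z)\bigl[C_1(z) + C_2(z) + C_N(z)\bigr] W^{1/2}(z)
\]
from \eqref{eq:newQc} in state-space form and then cascading them into a single causal system. The claimed $A$ is block lower triangular with diagonal blocks $F_W, F_N, F_M, F_S$, which tells me which lemma produces which state block: $F_W$ comes from $W^{1/2}$ (Lemma~\ref{lemma:factor_W}), $F_N$ from $C_N$ (Lemma~\ref{lemma:nehariSol}), $F_M$ from $M^{-/2}$ (Lemma~\ref{lemma:factor_M2}), and $F_S$ from the causal piece $C_1$ (Lemma~\ref{lemma:decomposition}). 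The piece $C_2$ naively contributes a fifth block with state matrix $F_E$, so one cancellation is needed.

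First I would read off compact realizations of the individual factors: $W^{1/2}(z) = R_W^{-/2}(I - H\c{F_W}K_W)$ maps $y$ to an intermediate signal $\tilde y$; the Nehari piece $C_N(z) = H_A\Pi(F_N\c{F_N} + I)K_N$ acts on $\tilde y$; and $M^{-/2}(z) = (I - K_M\c{F_M}G_E)R_M^{-/2}$ post-processes the sum of the three parallel branches. The causal pieces $C_1$ and $C_2$ carry nontrivial left factors that must be absorbed. The product $M^{-/2}(z)C_1(z)$ collapses via $M^{-/2}M^{1/2} = I$ (the two factors come from the same canonical factorization) into a system with state matrix $F_S$ alone; and the product $M^{-/2}(z)C_2(z)$ collapses into a system with state matrix $F_M$ alone by means of $G_E K_M = F_E - F_M$ together with the resolvent identity $\c{F_M}(F_E - F_M)\c{F_E} = \c{F_E} - \c{F_M}$. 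These two cancellations are what place $F_M$ and $F_S$ on the diagonal of $A$ while suppressing $F_E$.

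With three parallel branches of state matrices $F_S$, $F_M$, $F_N$ driven by $\tilde y$, I would assemble their parallel sum by stacking the states and adding the outputs, and then cascade on the input side with $W^{1/2}$. The cascade prepends the $F_W$ block and produces the cross-couplings $-K_N R_W^{-/2}H$, $-U_3 G_A R_W^{-/2}H$, and $-U_2 H^\ast R_W^{-1}H$ in the first column of $A$; the entry $G_E R_M^{-/2}H_A\Pi$ in the second column arises from the serial composition $M^{-/2}\circ C_N$. The output map $C$ collects the state contributions of each branch, and the feedthrough $D$ collects all remaining direct paths from $y_i$ to $q_i$ through the $R_W^{-/2}$, $R_M^{-/2}$, and $-K_M$ residual terms.

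The main obstacle is bookkeeping rather than any conceptual difficulty. To match the precise form of $B$ stated in the theorem I would adopt a mild shift convention that stores the one-step-advanced versions of certain component states; this is visible as the $F_W K_W$ entry in $B$ (rather than $K_W$) and as the $F_M U_3 G_A R_W^{-/2}$ summand inside $B_3$. After the shift I would verify that no spurious direct feedthrough is introduced into $q_i$, so that $C$ reduces to $[\,0,\;R_M^{-/2}H_A\Pi,\;-K_M,\;-G_2^\ast U_1 F_S\,]$ and all remaining feedthrough collects into the closed-form $D$. Causality and stability of the resulting realization follow automatically from Assumption~\ref{ass:stab} together with the stabilizing Riccati solutions of Lemmas~\ref{lemma:factor_W}, \ref{lemma:factor_gamma}, \ref{lemma:factor_nabla}, and \ref{lemma:factor_M2}, so the proof reduces to the algebraic verification of the off-diagonal blocks of $A$, of $B$, and of $D$.
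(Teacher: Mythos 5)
Your proposal follows essentially the same route as the paper's proof: the paper likewise multiplies out $M^{-/2}(z)\bigl(C_1(z)+C_2(z)+C_N(z)\bigr)W^{1/2}(z)$, collapses the $C_1$ branch via $M^{-/2}M^{1/2}=I$ and the $C_2$ branch via $F_M=F_E-G_EK_M$ with the resolvent identity (so $K_M(zI-F_E)^{-1}$ becomes $K_M(zI-F_M)^{-1}$, eliminating $F_E$), stacks the three parallel branches, cascades with $W^{1/2}$ to prepend the $F_W$ block and the first-column couplings, and finally removes the leftover factor $z$ with the shift identity $zA(zI-B)^{-1}C=A\bigl(I+(zI-B)^{-1}B\bigr)C$, which is exactly your ``shift convention'' producing $B=A\tilde B$ (hence $F_WK_W$ and the $F_MU_3G_AR_W^{-/2}$ term in $B_3$) and the feedthrough $D=C\tilde B$. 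No substantive difference or gap.
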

This theorem shows that we are able to find an online controller that minimizes regret over an infinite-horizon in a partially observable system. It allows for practical implementation of the controller as we will see in Section \ref{sec::sim}.
\section{Simulations}\label{sec::sim}
 In this section, we showcase the the practicality of implementing the regret-optimal controller and its effectiveness as compared to the $\mathcal H_2$ and $\mathcal H_\infty$ controllers, for various systems. We assess them in the frequency domain to enable a comprehensive comparison across a wide spectrum of disturbances.

 We examine four distinct systems: 1) A system with $n=m=p=q=4$ featuring a randomly generated unstable matrix $F$, with all other matrices set to $I$, 2) a system with $n=m=p=q=2$ characterized by a randomly generated marginally stable matrix $F$, while other matrices set to $I$, 3) an aircraft dynamic model (AC15) \cite{aircraft} with $F$  unstable and 4) an aircraft dynamic model (AC5) with $F$ marginally stable as described in \cite{aircraft}. In our analysis, we depict three key performance metrics: i) The squared Frobenius norm minimized by the $\mathcal H_2$ controller \eqref{eq::frobnorm}, ii) the squared operator norm minimized by $\mathcal H_\infty$ controller \eqref{eq::opnorm}, and iii) the regret minimized by the regret-optimal controller \eqref{eq::regnorm}, as is shown in the equations below:
 \begin{align}
     &\|T_Q\|_F^2=\frac{1}{2\pi}\int_0^{2\pi}Tr(T_Q^\ast(e^{jw})T_Q(e^{jw})) dw,\label{eq::frobnorm}  \\
     &\|T_Q\|^2=\max_{0\leq w \leq 2\pi} \sigma_{max}(T_Q^\ast(e^{jw})T_Q(e^{jw})), \label{eq::opnorm}\\
     &\|T_Q^\ast T_Q-T_{Q_0}^\ast T_{Q_0}\|=\max_{0\leq w \leq 2\pi} \sigma_{max}(T_Q^\ast(e^{jw})T_Q(e^{jw})-\nn \\ &\quad \quad \quad \quad \quad \quad \quad \quad \quad \quad \quad \quad \quad \quad \quad T_{Q_0}^\ast(e^{jw})T_{Q_0}(e^{jw})).\label{eq::regnorm}
 \end{align}

 The performance metrics across all frequencies are shown in Figure \ref{fig:performance} for the (AC15) system. Table \ref{table:main}
shows the different norms \eqref{eq::frobnorm}-\eqref{eq::regnorm} resulting for each controller for each of the four systems.

The non-causal controller achieves minimal Frobenius and operator norms. Specifically, $H_2$ minimizes the Frobenius norm (area under the curve) but exhibits notable peaks at certain frequencies (Figure \ref{fig:frob}, Table \ref{table:main}), while $H_\infty$ minimizes the operator norm (Figure \ref{fig:op}, Table \ref{table:main}) but demonstrates subpar average performance (large area under the curve in Figure \ref{fig:frob}). In contrast, the regret-optimal controller combines attributes of both $H_2$ and $H_\infty$. Its Frobenius and operator norms interpolate between those of $H_2$ and $H_\infty$, and it boasts the lowest regret (Figure \ref{fig:regret}, Table \ref{table:main}). Furthermore, it consistently maintains a stable separation from the non-causal controller, distinguishing itself from $H_2$ and $H_\infty$ controllers. By minimizing the maximum cost deviation from the non-causal controller, it achieves balanced performance across all input disturbances.

\begin{table}
    \centering
    \caption{Controllers' performance across different systems with highlighted best metrics}
    \label{tab:combined}
    \begin{subtable}{\linewidth}
        \centering
        \caption{Controllers' performance on random systems}
    \label{tab:performance}
    \begin{tabular}{|l|c|c|c||c|c|c|}
        \hline 
        & \multicolumn{3}{c||}{\shortstack{System 1 (random,\\unstable $F$)}} & \multicolumn{3}{c|}{\shortstack{System 2 (random, \\marginally stable $F$)}}  \\
        \hline \hline
        & $\|T_Q\|_F^2$ & $\|T_Q\|^2$ & Regret & $\|T_Q\|_F^2$ & $\|T_Q\|^2$ & Regret \\
        \hline
        NC &2.974 & 1.1371&-&1.4338 & 1.1366 & -\\
        \hline
        RO &6.7121 &4.0726 &\textbf{3.1322} &6.1077 &3.7573 &\textbf{2.7821} \\
        \hline
        H2 &\textbf{5.6905} &7.8981 & 6.9475&\textbf{4.5032} &6.8703 & 5.8715\\
        \hline
        H$\infty$ & 6.9961&  \textbf{3.6896}& 3.6365& 6.6593&\textbf{3.3806} & 3.2927\\
        \hline
    \end{tabular}
    \end{subtable}
    \quad 
    \\
    \begin{subtable}{\linewidth}
        \centering
          \caption{Controllers' performance on aircraft models \cite{aircraft}}
    \begin{tabular}{|l|c|c|c||c|c|c|}
        \hline 
        & \multicolumn{3}{c||}{System 3 (AC5)} & \multicolumn{3}{c|}{System 4 (AC15)}  \\
        \hline \hline
        & $\|T_Q\|_F^2$ & $\|T_Q\|^2$ & Regret & $\|T_Q\|_F^2$ & $\|T_Q\|^2$ & Regret \\
        \hline
        NC & 41.060&354.28& -&66.614 & 222.93&- \\
        \hline
        RO &1048.1 &1277.1 & \textbf{1006.8}&434.21 & 577.79&\textbf{363.24} \\
        \hline
        H2 & \textbf{269.03}& 2858.3&  2663.8&\textbf{343.34} &1054.2 &841.49 \\
        \hline
        H$\infty$ & 1187.1& \textbf{1166.1}& 1166.1&497.36 & \textbf{477.22}& 470.57\\
        \hline
    \end{tabular}
    \end{subtable}
    \label{table:main}
\end{table}

\begin{figure}
    \centering
    \begin{subfigure}{0.48\columnwidth}
        \includegraphics[width=\linewidth]{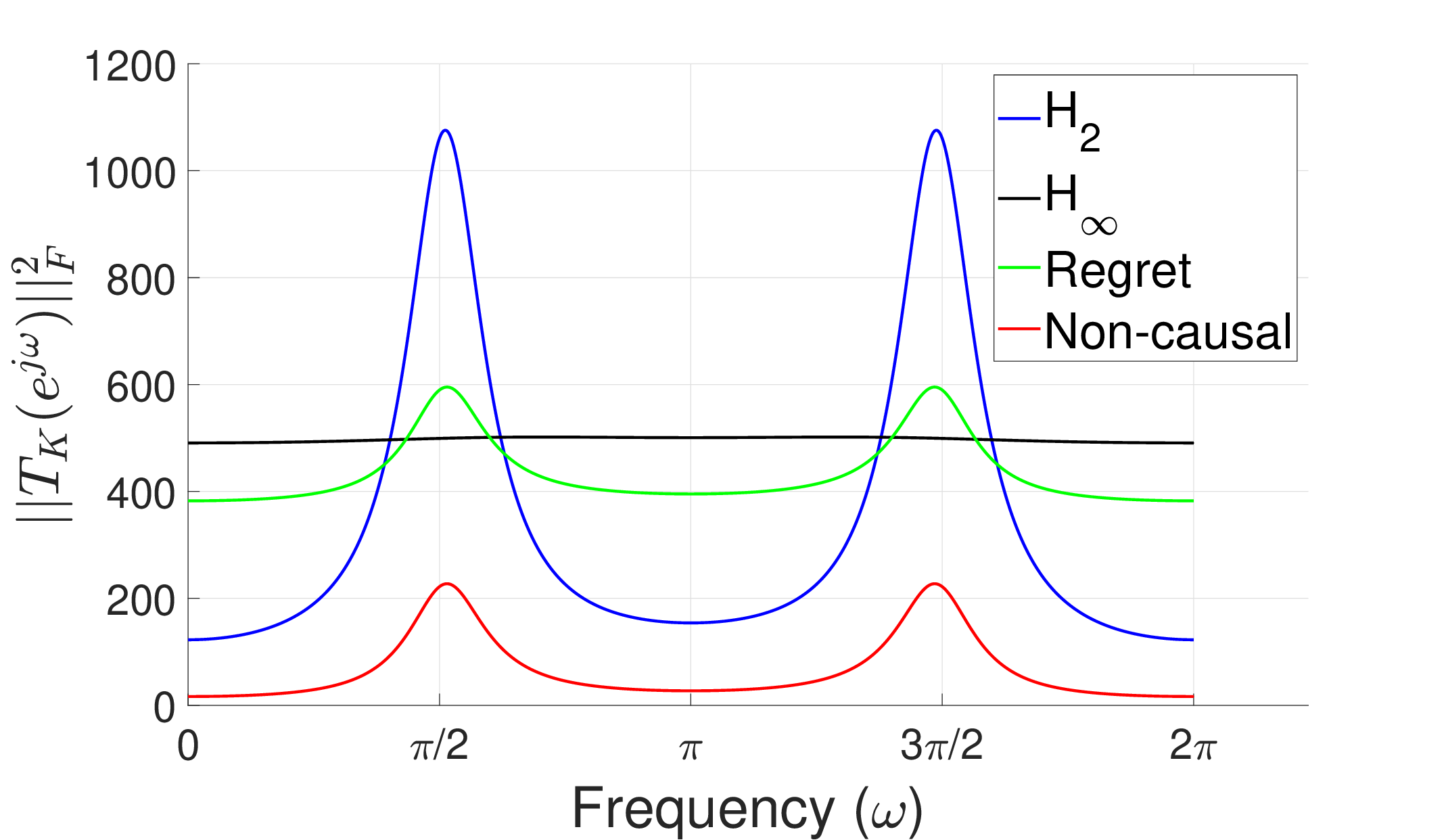}
        \caption{Squared Frobenius norm}
        \label{fig:frob}
    \end{subfigure}
    \hfill
    \begin{subfigure}{0.48\columnwidth}
        \includegraphics[width=\linewidth]{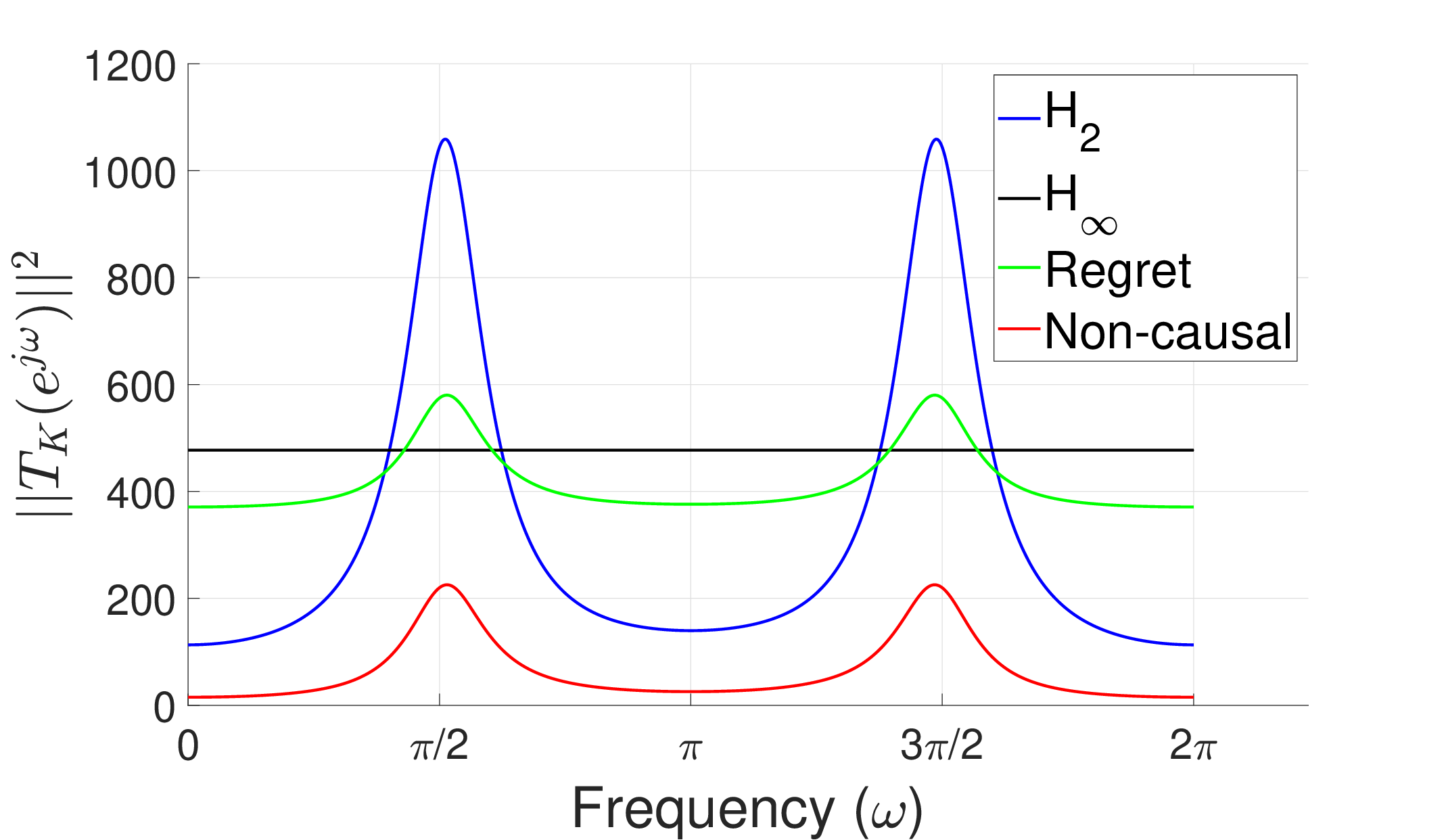}
        \caption{Squared operator norm}
        \label{fig:op}
    \end{subfigure}
    
    \vspace{\baselineskip} 
    
    \begin{subfigure}{\columnwidth}
        \centering
        \includegraphics[width=0.48\linewidth]{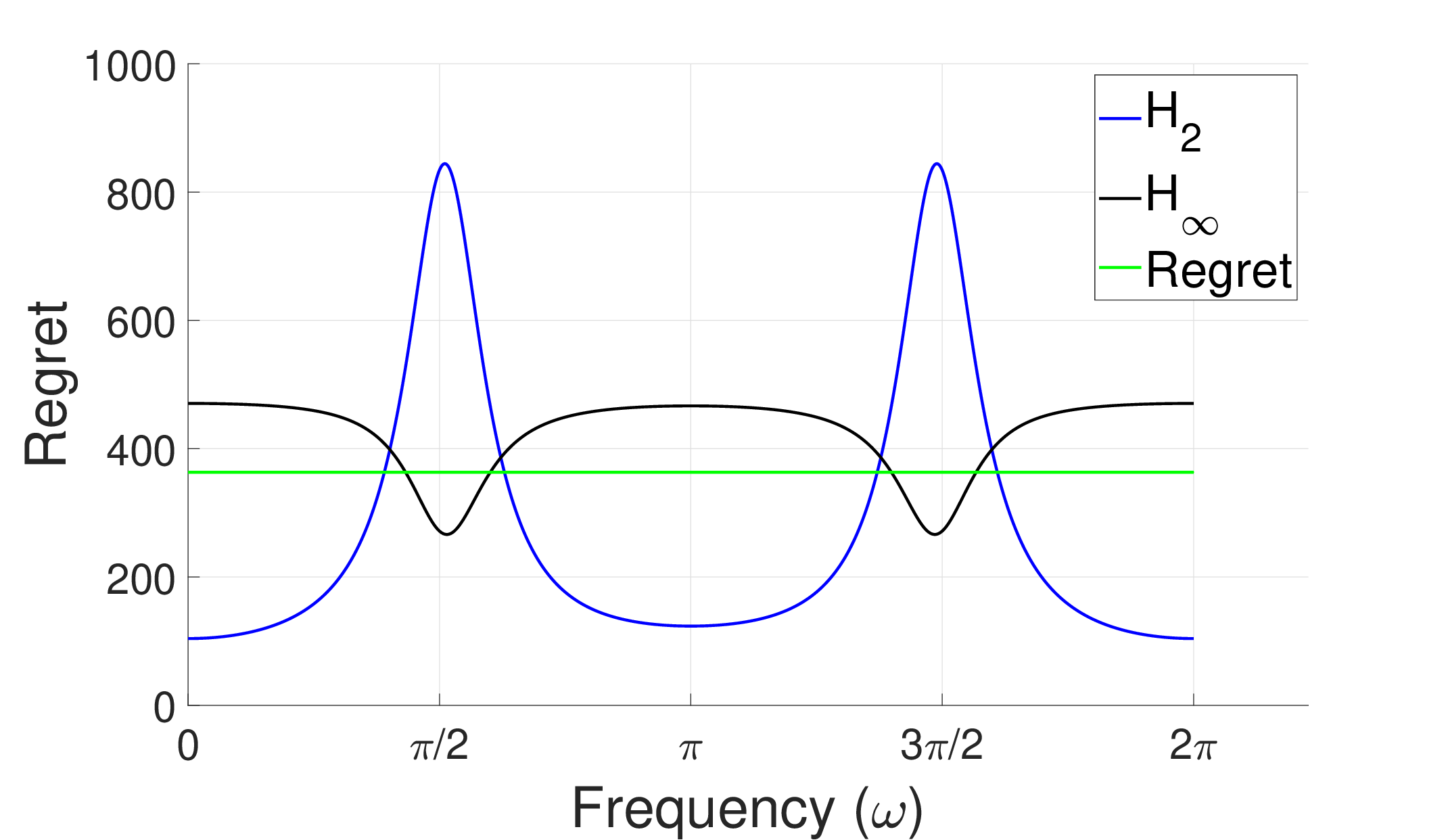}
        \caption{Regret}
        \label{fig:regret}
    \end{subfigure}
    \caption{Performance metrics for AC15 \cite{aircraft}}
    \label{fig:performance}
\end{figure}
\bibliographystyle{./bibliography/IEEEtran}
\bibliography{./bibliography/IEEEabrv,./bibliography/IEEEexample} 

\begin{thebibliography}{10}
\providecommand{\url}[1]{#1}
\csname url@samestyle\endcsname
\providecommand{\newblock}{\relax}
\providecommand{\bibinfo}[2]{#2}
\providecommand{\BIBentrySTDinterwordspacing}{\spaceskip=0pt\relax}
\providecommand{\BIBentryALTinterwordstretchfactor}{4}
\providecommand{\BIBentryALTinterwordspacing}{\spaceskip=\fontdimen2\font plus
\BIBentryALTinterwordstretchfactor\fontdimen3\font minus
  \fontdimen4\font\relax}
\providecommand{\BIBforeignlanguage}[2]{{%
\expandafter\ifx\csname l@#1\endcsname\relax
\typeout{** WARNING: IEEEtran.bst: No hyphenation pattern has been}%
\typeout{** loaded for the language `#1'. Using the pattern for}%
\typeout{** the default language instead.}%
\else
\language=\csname l@#1\endcsname
\fi
#2}}
\providecommand{\BIBdecl}{\relax}
\BIBdecl

\bibitem{sabag2021regret}
O.~Sabag, G.~Goel, S.~Lale, and B.~Hassibi, ``Regret-optimal controller for the
  full-information problem,'' in \emph{2021 American Control Conference (ACC)},
  2021, pp. 4777--4782.

\bibitem{sabag2023regretoptimal}
------, ``Regret-optimal {LQR} control,'' \emph{arXiv preprint
  arXiv:2105.01244}, 2021.

\bibitem{nehari1957bounded}
Z.~Nehari, ``On bounded bilinear forms,'' \emph{Annals of Mathematics},
  vol.~65, no.~1, pp. 153--162, 1957.

\bibitem{goel2020regret}
G.~Goel and B.~Hassibi, ``Regret-optimal control in dynamic environments,''
  \emph{arXiv preprint arXiv:2010.10473}, 2020.

\bibitem{pmlr-v168-martin22a}
A.~Martin, L.~Furieri, F.~D\"orfler, J.~Lygeros, and G.~Ferrari-Trecate, ``Safe
  control with minimal regret,'' in \emph{Proceedings of The 4th Annual
  Learning for Dynamics and Control Conference}, vol. 168.\hskip 1em plus 0.5em
  minus 0.4em\relax PMLR, Jun. 2022, pp. 726--738.

\bibitem{didier2022system}
A.~Didier, J.~Sieber, and M.~N. Zeilinger, ``A system level approach to regret
  optimal control,'' \emph{IEEE Control Systems Letters}, vol.~6, pp.
  2792--2797, 2022.

\bibitem{sabag2022optimal}
O.~Sabag, S.~Lale, and B.~Hassibi, ``Optimal competitive-ratio control,''
  \emph{arXiv preprint arXiv:2206.01782}, 2022.

\bibitem{goel2022competitive}
G.~Goel and B.~Hassibi, ``Competitive control,'' \emph{IEEE Transactions on
  Automatic Control}, 2022.

\bibitem{karapetyan2022regret}
A.~Karapetyan, A.~Iannelli, and J.~Lygeros, ``On the regret of ${H}_\infty$
  control,'' in \emph{2022 IEEE 61st Conference on Decision and Control
  (CDC)}.\hskip 1em plus 0.5em minus 0.4em\relax IEEE, 2022, pp. 6181--6186.

\bibitem{martin2022follow}
A.~Martin, L.~Furieri, F.~Dörfler, J.~Lygeros, and G.~Ferrari-Trecate,
  ``Follow the clairvoyant: an imitation learning approach to optimal
  control,'' 2022.

\bibitem{liu2023robust}
J.~Liu and P.~Seiler, ``Robust regret optimal control,'' \emph{arXiv preprint
  arXiv:2307.14297}, 2023.

\bibitem{hajar2023wasserstein}
J.~Hajar, T.~Kargin, and B.~Hassibi, ``Wasserstein distributionally robust
  regret-optimal control under partial observability,'' \emph{arXiv preprint
  arXiv:2307.04966}, 2023.

\bibitem{yan2023distributionally}
S.~Yan, F.~A. Taha, and E.~Bitar, ``A distributionally robust approach to
  regret optimal control using the wasserstein distance,'' \emph{arXiv preprint
  arXiv:2304.06783}, 2023.

\bibitem{sabag2022regret}
O.~Sabag and B.~Hassibi, ``Regret-optimal filtering for prediction and
  estimation,'' \emph{IEEE Transactions on Signal Processing}, vol.~70, pp.
  5012--5024, 2022.

\bibitem{blackbook}
\BIBentryALTinterwordspacing
B.~Hassibi, A.~H. Sayed, and T.~Kailath, \emph{Indefinite-Quadratic Estimation
  and Control}.\hskip 1em plus 0.5em minus 0.4em\relax Society for Industrial
  and Applied Mathematics, 1999. [Online]. Available:
  \url{https://epubs.siam.org/doi/abs/10.1137/1.9781611970760}
\BIBentrySTDinterwordspacing

\bibitem{goel21L4DC}
G.~Goel and B.~Hassibi, ``Regret-optimal measurement-feedback control,'' in
  \emph{Learning for Dynamics and Control}.\hskip 1em plus 0.5em minus
  0.4em\relax PMLR, 2021, pp. 1270--1280.

\bibitem{goel2022measurement}
------, ``Measurement-feedback control with optimal data-dependent regret,''
  \emph{arXiv preprint arXiv:2209.06425}, 2022.

\bibitem{aircraft}
F.~Leibfritz and W.~Lipinski, ``Description of the benchmark examples in
  compleib 1.0,'' \emph{Dept. Math, Univ. Trier, Germany}, vol.~32, 2003.

\bibitem{LEbabak}
\BIBentryALTinterwordspacing
T.~Kailath, A.~Sayed, and B.~Hassibi, \emph{Linear Estimation}, ser.
  Prentice-Hall information and system sciences series.\hskip 1em plus 0.5em
  minus 0.4em\relax Prentice Hall, 2000. [Online]. Available:
  \url{https://books.google.com/books?id=zNJFAQAAIAAJ}
\BIBentrySTDinterwordspacing

\end{thebibliography}
\appendices
\section{Canonical Factorizations}
In this appendix, we introduce two key lemmas illustrating the canonical factorization of a causal operator. In Lemma \ref{lemma:aux1_factor}, we demonstrate the anticausal/causal or upper/lower (UL) factorization, while in Lemma \ref{lemma:aux2_factor}, we showcase the causal/anticausal or lower/upper (LU) factorization.
\begin{lemma}\label{lemma:aux1_factor}
Consider a general strictly causal transfer matrix $\underline{M}(z) = \underline{H} \c{\underline{F}} \underline{G}$ with $\underline{R}\succeq0$ and ($\underline{F},\underline{G}$) stabilizable. Then, the following canonical factorization holds
\begin{align}
    \Omega^\ast(z^{-\ast})\Omega(z)&= R + \underline{M}^\ast(z^{-\ast})\underline{M}(z)
\end{align}
with the causal transfer matrix
\begin{align}
    \Omega(z)&= R_P^{1/2} (I + K_P \c{\underline{F}}\underline{G}),
\end{align}
where 
\begin{align}
    K_P&= R_P^{-1}\underline{G}^\ast P\underline{F}\\
    R_P&= R + \underline{G}^\ast P\underline{G} = R_P^{\ast/2}R_P^{1/2},
\end{align}
and $P=P^\ast$ is the unique stabilizing solution to the Riccati equation
\begin{align}
    P&= \underline{F}^\ast P \underline{F}   + \underline{H}^\ast \underline{H} - K_P^\ast R_P K_P.
\end{align}

The inverse of the factorization is causal and bounded on the unit circle, and is equal to
    \begin{align}\label{eq::omegInv}
        \Omega^{-1}(z)&=  (I - K_P \c{\underline{F}_P}\underline{G}) R_P^{-/2} 
    \end{align}
    with \begin{equation}
        F_P= \underline{F} - \underline{G}K_P.
    \end{equation}

\end{lemma}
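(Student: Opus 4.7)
The plan is to prove Lemma~\ref{lemma:aux1_factor} by three steps: (i) invoke classical DARE theory to guarantee that the stabilizing solution $P$ exists and that $\underline{F}_P = \underline{F} - \underline{G}K_P$ is Schur-stable; (ii) verify the factorization identity by direct expansion of $\Omega^\ast(z^{-\ast})\Omega(z)$ using the DARE to collapse the cross-terms; and (iii) obtain the inverse formula via the state-space push-through identity.

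For step (i), since $R \succeq 0$ and $\underline{H}^\ast \underline{H} \succeq 0$, the Popov function $R + \underline{M}^\ast(z^{-\ast}) \underline{M}(z)$ is positive semidefinite on the unit circle; combined with the stabilizability of $(\underline{F},\underline{G})$, standard DARE theory (e.g., \cite{blackbook}) ensures existence of a unique Hermitian stabilizing $P$, so that $\underline{F}_P$ has spectrum inside the open unit disk. This immediately makes $\Omega(z)$ and the candidate inverse~\eqref{eq::omegInv} well-defined, causal, and bounded on the unit circle.

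For step (ii), I would abbreviate $A \coloneqq \a{\underline{F}^\ast}$ and $B \coloneqq \c{\underline{F}}$ and expand
\begin{align}
    \Omega^\ast(z^{-\ast})\Omega(z) = R_P + R_P K_P B \underline{G} + \underline{G}^\ast A K_P^\ast R_P + \underline{G}^\ast A K_P^\ast R_P K_P B \underline{G}. \nn
\end{align}
Using $R_P K_P = \underline{G}^\ast P \underline{F}$ together with the resolvent identities $\underline{F} B = zB - I$ and $A\underline{F}^\ast = z^{-1} A - I$, the two middle terms simplify to $z\underline{G}^\ast P B \underline{G} - \underline{G}^\ast P \underline{G}$ and $z^{-1}\underline{G}^\ast A P \underline{G} - \underline{G}^\ast P \underline{G}$ respectively. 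Substituting the DARE $K_P^\ast R_P K_P = \underline{F}^\ast P \underline{F} + \underline{H}^\ast \underline{H} - P$ into the quartic term, the $\underline{H}^\ast \underline{H}$ piece yields precisely $\underline{M}^\ast(z^{-\ast})\underline{M}(z)$, while the $\underline{F}^\ast P \underline{F} - P$ piece expands via the same resolvent identities into four monomials that cancel the $z^{\pm 1}$ contributions from the middle terms and contribute an extra $+\underline{G}^\ast P \underline{G}$. Summing everything leaves $R_P - \underline{G}^\ast P \underline{G} + \underline{M}^\ast \underline{M} = R + \underline{M}^\ast \underline{M}$, as desired.

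For step (iii), the candidate inverse $\Omega^{-1}(z) = (I - K_P \c{\underline{F}_P}\underline{G}) R_P^{-/2}$ is checked by direct multiplication: writing $\Omega(z)R_P^{-/2}\Omega^{-1}(z)R_P^{1/2}$ and using $\c{\underline{F}} - \c{\underline{F}_P} = -\c{\underline{F}}\underline{G}K_P \c{\underline{F}_P}$ (a one-line consequence of $\underline{F}_P = \underline{F} - \underline{G}K_P$), all terms telescope to the identity. Causality and boundedness on the unit circle follow from Schur-stability of $\underline{F}_P$ from step~(i). The main obstacle is purely bookkeeping in step~(ii), where many cross-terms must be tracked with the correct signs and $z^{\pm 1}$ factors; the DARE substitution is the single conceptual move, and everything else is a disciplined expansion.
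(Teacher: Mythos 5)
Your proposal is correct and arrives at the same formulas, but it argues in the opposite direction from the paper. The paper derives the factorization constructively: it writes $R+\underline{M}^\ast(z^{-\ast})\underline{M}(z)$ as a quadratic form $\begin{pmatrix}\underline{G}^\ast(z^{-1}I-\underline{F}^\ast)^{-1} & I\end{pmatrix}(\cdot)\begin{pmatrix}(zI-\underline{F})^{-1}\underline{G}\\ I\end{pmatrix}$, adds a telescoping zero identity valid for any Hermitian $P$, and block-factors the resulting middle (Popov) matrix as an LDU product; choosing $P$ to be the stabilizing DARE solution zeroes the $(1,1)$ block, and the outer triangular factors deliver $\Omega$, $K_P$, $R_P$ without guessing them in advance. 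You instead take the claimed $\Omega$ and verify $\Omega^\ast(z^{-\ast})\Omega(z)=R+\underline{M}^\ast\underline{M}$ by direct expansion, collapsing cross terms with the resolvent identities and the DARE; algebraically this is the paper's zero-identity trick run backwards, so the content is the same — you trade the derivation of where $K_P,R_P$ come from for a shorter self-contained verification (your cancellation bookkeeping in step (ii) checks out). Both arguments rest on the same external fact, existence and uniqueness of the stabilizing Hermitian DARE solution under stabilizability of $(\underline{F},\underline{G})$, which the paper also simply cites, so you are at the paper's level of rigor there. For the inverse, the paper invokes the matrix inversion lemma while you telescope with the resolvent difference identity; note that the correct identity is $(zI-\underline{F})^{-1}-(zI-\underline{F}_P)^{-1}=+(zI-\underline{F})^{-1}\underline{G}K_P(zI-\underline{F}_P)^{-1}$ (the minus sign you wrote would spoil the cancellation), and your passing claim that $\Omega(z)$ itself is bounded on the unit circle is neither needed nor guaranteed when $\underline{F}$ has unit-circle eigenvalues — the lemma asserts boundedness only of $\Omega^{-1}(z)$, which follows from the stability of $\underline{F}_P$ exactly as you and the paper conclude.
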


\begin{proof}[Proof of Lemma \ref{lemma:aux1_factor}]
We can express $R+ \underline{M}^\ast(z^{-\ast})\underline{M}(z)$ in matrix form as
\begin{align}\label{eq:proof_factorization}
&\begin{pmatrix} \underline{G}^{*}\a{\underline{F}^\ast} & I \end{pmatrix} \begin{pmatrix} \underline{H}^\ast \underline{H} & 0 \\ 0 & R\end{pmatrix} \begin{pmatrix} \c{\underline{F}} \underline{G} \\ I \end{pmatrix}\nn\\
&\ =\begin{pmatrix} \underline{G}^{*} \a{\underline{F}^\ast} & I \end{pmatrix}\times \nn \\ &\ \ \ \begin{pmatrix} \underline{H}^\ast \underline{H} - P + \underline{F}^{*}P\underline{F} & \underline{F}^{*}P \underline{G} \\ \underline{G}^\ast P \underline{F} & R + \underline{G}^\ast P \underline{G} \end{pmatrix} \begin{pmatrix} \c{\underline{F}}\underline{G}  \\ I \end{pmatrix},
\end{align}
where the equality holds for any $P = P^\ast$ from the identity
\begin{align}
   0&= \begin{pmatrix} \underline{G}^{*} \a{\underline{F}^\ast} & I \end{pmatrix} \begin{pmatrix} - P + \underline{F}^{*}P\underline{F} & \underline{F}^{*}P \underline{G} \\ \underline{G}^\ast P \underline{F} & \underline{G}^\ast P \underline{G} \end{pmatrix} \times\nn \\ &\ \ \ \  \begin{pmatrix} \c{\underline{F}}\underline{G}  \\ I \end{pmatrix}.
\end{align}

The middle matrix in \eqref{eq:proof_factorization} can be factored as \begin{equation}
    \begin{pmatrix} I & K_P^\ast \\ 0 & I \end{pmatrix} \begin{pmatrix} \Gamma(P) & 0   \\0 & R_P  \end{pmatrix} \begin{pmatrix} I & 0 \\ K_P & I \end{pmatrix},
\end{equation} with 
\begin{align}
\Gamma(P) &= - P + \underline{F}^{*}P\underline{F} + \underline{H}^\ast \underline{H} - K_P^\ast R_PK_P\nn\\
K_P &= R_P^{-1}\underline{G}^{*}P\underline{F}\nn\\
R_P     &= R + \underline{G}^\ast P\underline{G}.
\end{align}
Since $(\underline{F}, \underline{G})$ is stabilizable, then the Riccati equation $\Gamma(P) = 0$ has a unique stabilizing Hermitian solution. Suppose $P$ is chosen to be this solution. Then, we can obtain the factorization by defining
\begin{equation}
    \Omega(z) =  R_P^{1/2} (I + K_P \c{\underline{F}}\underline{G})
\end{equation}
with $R_P = R_P^{\ast/2}R_P^{1/2}$.

Moreover, $\Omega^{-1}(z)$ \eqref{eq::omegInv} is found using the matrix inversion lemma.  Its poles are at the eigenvalues of $F_P$ which is stable (since $P$ is the stabilizing solution to the Riccati equation), which due to the causality of $\Omega^{-1}(z)$ guarantees the boundedness of $\Omega^{-1}(z)$ on the unit circle.
\end{proof}
\begin{lemma}\label{lemma:aux2_factor}
Consider a general strictly causal transfer matrix $\underline{M}(z) = \underline{H} \c{\underline{F}} \underline{G}$ with $R\succeq0$ and ($\underline{F},\underline{H}$) detectable. Then, the following canonical factorization holds
\begin{align}
    \Omega(z)\Omega^\ast(z^{-\ast})&= R + \underline{M}(z)\underline{M}^\ast(z^{-\ast})
\end{align}
with the causal transfer matrix
\begin{align}\label{eq:app:omega}
    \Omega(z)&= (I + \underline{H} \c{F}K_P) R_P^{1/2},
\end{align}
where 
\begin{align}
    K_P&= \underline{F}P\underline{H}^\ast R_P^{-1} \label{eq:app:K}\\
    R_P&= R + \underline{H} P\underline{H}^\ast = R_P^{1/2}R_P^{\ast/2},\label{eq:app:R}
\end{align}
and $P=P^\ast$ is the unique stabilizing solution to the Riccati equation
\begin{align}
    P&= \underline{F} P \underline{F}^\ast + \underline{G}\underline{G}^\ast - K_P R_P K_P^\ast.
\end{align}
The inverse of the factorization is causal and bounded on the unit circle, and is equal to
    \begin{align}
        \Omega^{-1}(z)&= R_P^{-/2} (I - \underline{H} \c{F_P}K_P) \label{eq:app:inv} 
    \end{align}
    with \begin{equation}
        F_P= \underline{F} - K_PH.\label{eq:app:F}
    \end{equation}
\end{lemma}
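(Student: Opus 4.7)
The plan is to follow the same template as Lemma~\ref{lemma:aux1_factor}, but with the roles of $\underline{G}$ and $\underline{H}^\ast$ interchanged so that the sandwich structure now places the resolvent outside on the left and its conjugate on the right, matching the dual LU form. I would begin by rewriting the right-hand side as a quadratic sandwich
\begin{align*}
R + \underline{M}(z)\underline{M}^\ast(z^{-\ast}) = \begin{pmatrix} \underline{H}\c{\underline{F}} & I \end{pmatrix} \begin{pmatrix} \underline{G}\underline{G}^\ast & 0 \\ 0 & R \end{pmatrix} \begin{pmatrix} \a{\underline{F}^\ast}\underline{H}^\ast \\ I \end{pmatrix},
\end{align*}
and then, for any Hermitian matrix $P$, add the zero identity
\begin{align*}
0 = \begin{pmatrix} \underline{H}\c{\underline{F}} & I \end{pmatrix} \begin{pmatrix} -P + \underline{F}P\underline{F}^\ast & \underline{F}P\underline{H}^\ast \\ \underline{H}P\underline{F}^\ast & \underline{H}P\underline{H}^\ast \end{pmatrix} \begin{pmatrix} \a{\underline{F}^\ast}\underline{H}^\ast \\ I \end{pmatrix},
\end{align*}
which is verified term-by-term from the resolvent identities $\c{\underline{F}}\underline{F} = z\c{\underline{F}} - I$ and $\underline{F}^\ast\a{\underline{F}^\ast} = z^{-1}\a{\underline{F}^\ast} - I$, exactly as in the proof of Lemma~\ref{lemma:aux1_factor}.

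Next, I would complete the square by performing a block-LDU decomposition of the combined middle matrix, this time pivoting on the \emph{bottom-right} block $R_P = R + \underline{H}P\underline{H}^\ast$ (positive definite because $R\succeq 0$ and $P$ Hermitian). The Schur-complement gain $K_P = \underline{F}P\underline{H}^\ast R_P^{-1}$ appears in the off-diagonal factors, and requiring the upper-left Schur complement to vanish produces the Riccati equation
\begin{align*}
P = \underline{F}P\underline{F}^\ast + \underline{G}\underline{G}^\ast - K_PR_PK_P^\ast
\end{align*}
of the statement. Detectability of $(\underline{F},\underline{H})$ is the standard hypothesis guaranteeing a unique Hermitian stabilizing solution to this dual Riccati, and is the mirror of the stabilizability hypothesis used in Lemma~\ref{lemma:aux1_factor}. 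Selecting that $P$, the outer factors collapse to
\begin{align*}
R + \underline{M}(z)\underline{M}^\ast(z^{-\ast}) = \bigl(I + \underline{H}\c{\underline{F}}K_P\bigr) R_P \bigl(I + K_P^\ast\a{\underline{F}^\ast}\underline{H}^\ast\bigr),
\end{align*}
from which the factorization~\eqref{eq:app:omega} is read off upon splitting $R_P = R_P^{1/2}R_P^{\ast/2}$.

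For the inverse, I would apply the matrix inversion lemma to $(I + \underline{H}\c{\underline{F}}K_P)^{-1}$, which rewrites it as $I - \underline{H}\c{F_P}K_P$ with $F_P = \underline{F} - K_P\underline{H}$, yielding~\eqref{eq:app:inv}. Causality is immediate from this state-space realization, and boundedness on the unit circle follows because the stabilizing choice of $P$ is precisely the one making $F_P$ Schur-stable. The main obstacle is merely the careful algebraic bookkeeping of the $z$ and $z^{-\ast}$ arguments through the sandwich and the verification of the $P$-identity as a matrix polynomial identity rather than a scalar one; the substantive content, namely the invocation of detectability to select the stabilizing solution of the dual Riccati, is the direct mirror of the stabilizability argument of Lemma~\ref{lemma:aux1_factor} and introduces no new technical difficulty.
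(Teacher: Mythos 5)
Your proposal is correct and is precisely the argument the paper intends: the paper omits this proof, stating only that it is the dual of Lemma~\ref{lemma:aux1_factor} (deferring details to Theorem~8.3.1 of the cited reference), and your writeup carries out exactly that dualization --- the transposed sandwich, the zero identity for Hermitian $P$, the block-LDU pivot on $R_P$ yielding the filtering Riccati equation, detectability in place of stabilizability, and the matrix inversion lemma for $\Omega^{-1}(z)$. The only cosmetic quibble is that positive definiteness of $R_P$ follows from $R\succeq 0$ together with $P\succeq 0$ for the stabilizing solution (Hermitian alone does not suffice), but this is at the same level of rigor as the paper's own treatment of Lemma~\ref{lemma:aux1_factor}.
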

\begin{proof}[Proof of Lemma \ref{lemma:aux2_factor}]
The proof is similar to the proof of Lemma \ref{lemma:aux1_factor} and will be ommitted here. Refer to Theorem 8.3.1 in \cite{LEbabak} for a detailed proof. 
\end{proof}

\noindent\textbf{Corollary to Lemma \ref{lemma:aux2_factor}. }
\textit{Consider a general strictly causal transfer matrix $\underline{M}(z) = \underline{H} \c{\underline{F}} \underline{G}$ with $R\succeq0$ and ($\underline{F},\underline{H}$) detectable. Then, the following canonical factorization holds
\begin{align}  \Omega(z)\Omega^\ast(z^{-\ast})&= R - \underline{M}(z)\underline{M}^\ast(z^{-\ast})
\end{align}
with the causal transfer matrix $ \Omega(z)$ as in \eqref{eq:app:omega}, $K_P$ as in \eqref{eq:app:K}, $R_P$ as in \eqref{eq:app:R}, 
and $P=P^\ast$ is the unique stabilizing solution to the Riccati equation
\begin{align}
    P&= \underline{F} P \underline{F}^\ast - \underline{G}\underline{G}^\ast - K_P R_P K_P^\ast.
\end{align}
The inverse of the factorization is causal and bounded on the unit circle, and is equal to $\Omega^{-1}(z)$ as in \eqref{eq:app:inv}, and $F_P$ as in \eqref{eq:app:F}.}

\section{Proofs}\label{sec:proofs}
In this appendix, we present the proofs for the theorems and lemmas presented in the paper.
\begin{proof}[Proof of Theorem \ref{th:reduction}]
To reduce the sub-optimal regret problem to a Nehari problem with respect to the reference controller $\mathcal Q_0$, we should simplify the inequality
\begin{align}\label{eq:proof_reduction_regret_def}
T_{\mathcal Q}^\ast  T_{\mathcal Q} - T_{\mathcal Q_0} T_{\mathcal Q_0} \preceq \gamma^2\mathcal I,
\end{align}
which is equivalent to 
\begin{align}\label{eq:proof_reduction_subopt_rotate}
        (\theta T_{\mathcal Q} \psi)^\ast \theta T_{\mathcal Q} \psi - (\theta T_{\mathcal Q_0} \psi)^\ast \theta T_{\mathcal Q_0} \psi \preceq \gamma^2 \mathcal I,
\end{align}
for any unitary operators $\theta$ and $\psi$. We choose the unitary operators 
\begin{align}
\theta &= 
 \begin{pmatrix}
\mathcal S^{-/2}& 0\\
0&\mathcal T^{-\ast/2} 
\end{pmatrix}
\begin{pmatrix}
I & - \mathcal P_{12}\\
 \mathcal P_{12}^\ast& I 
\end{pmatrix}\nn\\
\psi 
&= \begin{pmatrix}
I& \mathcal P_{21}^\ast\\
-\mathcal P_{21}&I
\end{pmatrix} \begin{pmatrix}
\mathcal V^{-/2}&0\\
0&\mathcal U^{-\ast/2}
\end{pmatrix},
\end{align}
where the canonical factorizations are defined as 
\begin{align}
\mathcal S^{1/2} \mathcal S^{\ast/2}&= I + \mathcal P_{12} \mathcal P_{12}^\ast , \ \ \ \mathcal U^{1/2}\mathcal U^{\ast/2} = I + \mathcal P_{21} \mathcal P_{21}^\ast, \nn\\
\mathcal T^{\ast/2} \mathcal T^{1/2}&= I + \mathcal P_{12}^\ast \mathcal P_{12}, \ \ \ \mathcal V^{\ast/2} \mathcal V^{1/2}= I + \mathcal P_{21}^\ast \mathcal P_{21},
\end{align}
with $\mathcal E^{1/2}$ being a causal operator for any positive operator~$\mathcal E$.

The operator $\theta T_{\mathcal Q} \psi$ can be computed from the product of its three middle matrices as
\begin{align}\label{eq:proof_reduction_prod3}
&\begin{pmatrix}
I& - \mathcal P_{12}\\
\mathcal P_{12}^\ast&I
\end{pmatrix} 
\begin{pmatrix}
    \mathcal P_{12} \mathcal Q \mathcal P_{21} + \mathcal P_{11}& \mathcal P_{12} \mathcal Q\\
    \mathcal Q \mathcal P_{21}& \mathcal Q
    \end{pmatrix} \times \nn
   \\
   &  \begin{pmatrix}
I& \mathcal P_{21}^\ast\\
-\mathcal P_{21}&I
\end{pmatrix}\nn \\
&=\nn\\
& \begin{pmatrix}
\mathcal P_{11} & \mathcal P_{11}\mathcal P_{21}^\ast\\
\mathcal P_{12}^\ast \mathcal P_{11}& ( I + \mathcal P_{12}^\ast \mathcal P_{12})\mathcal Q (I + \mathcal P_{21}\mathcal P_{21}^\ast) + \mathcal P_{12}^\ast \mathcal P_{11}\mathcal P_{21}^\ast
\end{pmatrix}. \nn
\end{align}
The remaining diagonal matrices can be easily added to obtain $\theta T_{\mathcal Q} \psi$. 
It also follows directly that the non-causal $H_2$ controller is 
\begin{equation}
    \mathcal Q_2 \triangleq - ( I + \mathcal P_{12}^\ast \mathcal P_{12})^{-1}\mathcal P_{12}^\ast \mathcal P_{11}\mathcal P_{21}^\ast(I + \mathcal P_{21}\mathcal P_{21}^\ast)^{-1}
\end{equation}
since the Frobenius norm is minimized when the $(2,2)$ term is equal to zero. Now, omitting additive terms that do not depend on $\mathcal Q$ (since they have no effect on the regret), the squared cost operator  is
\begin{align}
(\theta T_Q \psi)^\ast \theta T_Q \psi &\dot{=}\nn \\ 
   &\begin{pmatrix}
    0   &  
    \mathcal V^{-\ast/2} \mathcal P_{11}^\ast \mathcal P_{12} ( \mathcal Q - \mathcal Q_2)\mathcal U^{1/2}\\
    (\cdot)^\ast & \mathcal U^{\ast/2}( \mathcal Q^\ast - \mathcal Q_2^\ast )\mathcal T( \mathcal Q - \mathcal Q_2 )\mathcal U^{1/2}
    \end{pmatrix}.
\end{align}
The sub-optimal regret problem in \eqref{eq:proof_reduction_subopt_rotate} can be written as
    \begin{align}\label{eq:proof_reduction_regret_explicit}
    \begin{pmatrix}
    \gamma^2 \mathcal I &  \mathcal V^{-\ast/2} \mathcal P_{11}^\ast \mathcal P_{12} 
    (\mathcal Q_0 - \mathcal Q)\mathcal U^{1/2} \\
    (\cdot)^\ast & \mathcal N_{2,2}
    \end{pmatrix}\succeq 0,
\end{align}
where \begin{align}
    \mathcal N_{2,2}&\triangleq \gamma^2 \mathcal I + \mathcal U^{\ast/2}( \mathcal Q_0^\ast - \mathcal Q_2^\ast )\mathcal T ( \mathcal Q_0 -\mathcal Q_2 )\mathcal U^{1/2} - \mathcal U^{\ast/2}\times \nn \\&\ \ \ \ ( \mathcal Q^\ast - \mathcal Q_2^\ast )\mathcal T( \mathcal Q - \mathcal Q_2 )\mathcal U^{1/2}.
\end{align}
Since $\gamma^2 \mathcal I\succ0$, the condition \eqref{eq:proof_reduction_regret_explicit} can be rewritten using its Schur complement as
\begin{align}\label{eq:proof_reduction_PI}
    \mathcal N_{2,2} &\succeq  \mathcal U^{\ast/2} (\mathcal Q - \mathcal Q_0)^\ast \mathcal P_{12}^\ast \mathcal P_{11}
    \gamma^{-2} \mathcal V^{-1} \mathcal P_{11}^\ast \mathcal P_{12}
    \times \nn \\ &\ \ \ \ (\mathcal Q - \mathcal Q_0)\mathcal U^{1/2}.
\end{align}
Using the completion of squares, we can re-write \eqref{eq:proof_reduction_PI} as 
\begin{align}\label{eq:proof_reduction_2}
    &(\mathcal Q^\ast - \mathcal B^\ast \mathcal D^{-1}) \mathcal D (\mathcal Q - \mathcal D^{-1}\mathcal B)\nn\\
    &\ \ \ \preceq \gamma^2 \mathcal U^{-1} + (\mathcal Q_2 - \mathcal Q_0)^\ast \mathcal T \mathcal D^{-1}\mathcal T(\mathcal Q_2 - \mathcal Q_0)
\end{align}
with 
\begin{equation}
    \mathcal D \triangleq \mathcal T + \mathcal P_{12}^\ast \mathcal P_{11}
    \gamma^{-2} \mathcal V^{-1} \mathcal P_{11}^\ast \mathcal P_{12}
\end{equation}
and \begin{align}
    \mathcal B &\triangleq \mathcal T \mathcal  Q_2  + \mathcal P_{12}^\ast \mathcal P_{11}
    \gamma^{-2} \mathcal V^{-1} \mathcal P_{11}^\ast \mathcal P_{12}\mathcal Q_0 \nn\\&= \mathcal T (\mathcal  Q_2 - \mathcal Q_0)  + \mathcal D \mathcal Q_0.
\end{align}

To complete the proof, define the canonical factorizations of the positive operators 
\begin{align}\label{eq:proof_reduction_MN}
     \mathcal W&= \mathcal U^{-1} + \gamma^{-4} (\mathcal Q_2 - \mathcal Q_0)^\ast \mathcal T \mathcal M^{-1}\mathcal T(\mathcal Q_2 - \mathcal Q_0)\\
     \mathcal M&= \gamma^{-2}\mathcal I+ \gamma^{-2} \mathcal P_{12}^\ast(\mathcal I +  \gamma^{-2}\mathcal P_{11}(\mathcal I + \mathcal P_{21}^\ast \mathcal P_{21})^{-1} \mathcal P_{11}^\ast)\times \nn \\ &\ \ \ \ \mathcal P_{12},
\end{align}
to express concisely the condition \eqref{eq:proof_reduction_2} as 
\begin{align}
(\mathcal M^{1/2}\mathcal Q\mathcal W^{-/2} - \mathcal X  )^\ast  (\mathcal M^{1/2}\mathcal Q\mathcal W^{-/2} - \mathcal X ) \preceq \mathcal I 
\end{align}
with 
\begin{equation}
    \mathcal X \triangleq \mathcal M^{1/2}\mathcal Q_0\mathcal W^{-/2} + \gamma^{-2}\mathcal M^{-\ast/2} \mathcal T (\mathcal  Q_2 - \mathcal Q_0)\mathcal W^{-/2}.
\end{equation}
By decomposing the operator $\mathcal X = \{\mathcal X\}_+ + \{\mathcal X\}_-$, and the fact that the operator $\mathcal M^{1/2}{\mathcal Q}\mathcal W^{-/2}$ is a product of three causal operators, we conclude that \eqref{eq:proof_reduction_MN} is a Neahri problem of finding a causal operator to approximate the anticausal operator $\{\mathcal X\}_-$. Furthermore, if there exists a solution to this Nehari problem, say $\mathcal C$, then the solution to the sub-optimal regret problem in \eqref{eq:proof_reduction_regret_def} is 
\begin{equation}
    \mathcal Q = \mathcal M^{-/2}(\mathcal C + \{\mathcal X\}_+)\mathcal W^{1/2}.
\end{equation}
\end{proof}

\begin{proof}[Proof of Lemma \ref{lemma:factor_W} ] To obtain the required factorization, we apply Lemma \ref{lemma:aux2_factor} with $\underline{R}=I$, $\underline{H}=H=$, $\underline F=F$, and $\underline{G}=G_1$.
\end{proof}

\begin{proof}[Proof of Lemma \ref{lemma:factor_M}]

The inverse of the positive inner operator $I + \gamma^{-2} \mathcal P_{11} (I + \mathcal P_{21}^\ast \mathcal P_{21})^{-1} \mathcal P_{11}^\ast$ can be factorized as
\begin{align}
    \nabla \nabla^\ast&= (I + \gamma^{-2} \mathcal P_{11} (I + \mathcal P_{21}^\ast \mathcal P_{21})^{-1} \mathcal P_{11}^\ast)^{-1}\nn\\
    &= I - \mathcal P_{11} ( \gamma^{2}(I + \mathcal P_{21}^\ast \mathcal P_{21}) + \mathcal P_{11}^\ast \mathcal P_{11})^{-1} \mathcal P_{11}^\ast\nn\\
    &= I - \mathcal P_{11} ( \gamma^{2} (I + \mathcal P_{21}^\ast \mathcal P_{21}) + \mathcal P_{11}^\ast \mathcal P_{11})^{-1}\mathcal P_{11}^\ast\nn\\
    &= I - \mathcal P_{11} \Gamma^{-1}\Gamma^{-\ast} \mathcal P_{11}^\ast,
\end{align}
where in the last equality we use the canonical factorization
\begin{align}
 \Gamma^\ast \Gamma&= \gamma^{2} (I +\mathcal P_{21}^\ast \mathcal P_{21}) + \mathcal P_{11}^\ast \mathcal P_{11}.
\end{align}
The operator can now be directly factorized as
\begin{equation}
    \mathcal M^{\ast/2} \mathcal M^{1/2} = \gamma^{-2}(\mathcal I + \mathcal P_{12}^\ast \nabla^{-\ast}\nabla^{-1} \mathcal P_{12}).
\end{equation}
\end{proof}
\begin{proof}[Proof of Lemma \ref{lemma:factor_gamma}]
We start by writing the transfer matrix explicitly
\begin{align}
    &\gamma^{2} I + \gamma^{2} P_{21}^\ast(z^{-\ast}) P_{21}(z) + P_{11}^\ast(z^{-\ast}) P_{11}(z) \nn\\
    &\ = \gamma^{2} I + G_1^\ast\a{F^\ast} (\gamma^{2} H^\ast H + L^\ast L)\times \nn \\ &\ \ \ \ \c{F}G_1.\nn
\end{align}
To obtain the required factorization, we apply Lemma \ref{lemma:aux1_factor} with $\underline R = \gamma^2I$, $\underline H = (\gamma^2 H^\ast H + L^\ast L)^{1/2}$, $\underline F=F$, and $\underline G = G_1$. 
\end{proof}

\begin{proof}[Proof of Lemma \ref{lemma:factor_nabla}]
We start by simplifying the product 
\begin{align}
    G(z) \Gamma^{-1}(z)
    &= L\c{F_\Gamma}G_1R_\Gamma^{-/2}.
\end{align}
Thus, the factorization should be for the expression
\begin{align}
    I - L\c{F_\Gamma}G_1R_\Gamma^{-1}G_1^\ast \c{F_\Gamma^\ast} L^\ast.
\end{align}
We obtain the desired result by applying the Corollary to Lemma \ref{lemma:aux2_factor} with $\underline R = I$, $\underline H = L$, $\underline F = F_\Gamma$ and $\underline G = - G_1R_\Gamma^{-/2}$.
\end{proof}

\begin{proof}[Proof of Lemma \ref{lemma:factor_M2}]
We start by simplifying $\nabla^{-1}(z)P_{12}(z)$ as follows:
\begin{align}\label{eq:proof_M_simplified}
    &R_\nabla^{-/2}(I + L \c{F_\Gamma}K_\nabla)^{-1}   L\c{F}G_2\nn\\
    &= R_\nabla^{-/2}L(I - ( zI - F_\Gamma + K_\nabla L)^{-1}K_\nabla L)  \c{F}G_2\nn\\
    &= R_\nabla^{-/2}L(I - ( zI - F_\nabla)^{-1}K_\nabla L)  \c{F}G_2.
\end{align}
We can write \eqref{eq:proof_M_simplified} concisely as $$    \gamma^{-1}\nabla^{-1}(z)F(z)= H_E
    \c{F_E}G_E,$$
with 
\begin{align}
H_E &\triangleq \begin{pmatrix}
    R_\nabla^{-/2}L&R_\nabla^{-/2}L 
    \end{pmatrix} ,\nn \\ F_E &\triangleq \begin{pmatrix}
F&0\\
- K_\nabla L & F_\nabla
\end{pmatrix},  \quad G_E \triangleq 
    \begin{pmatrix}
    \gamma^{-1}G_2\\
    0 
    \end{pmatrix}.       
\end{align}

In order to complete the factorization of $M(z)$, we apply Lemma \ref{lemma:aux1_factor} with $\underline R = \gamma^{-2}I$, $\underline H=H_E$, $\underline F=F_E$, and $\underline G = G_E$.
\end{proof}

\begin{proof}[Proof of Lemma \ref{lemma:factorization_S}] The desired factorization can be obtained through applying Lemma \ref{lemma:aux2_factor} with $\underline R=I$, $\underline H=L$, $\underline F=F$, and $\underline G=G_2$.
\end{proof}
\begin{proof}[Proof of Lemma \ref{lemma:decomposition}]
The operator $$M^{1/2}(z)Q_2(z)W^{-/2}(z)$$ that needs to be decomposed can be expressed as
\begin{align}\label{eq:proof_dec_main}
  & - M^{1/2}(z) (I + P_{12}^\ast(z^{-\ast}) P_{12}(z))^{-1} P_{12}^\ast(z^{-\ast}) P_{11}(z) \times \nn\\
  &\ \ \ \ P_{21}^\ast(z^{-\ast}) (I + P_{21}(z)P_{21}^\ast(z^{-\ast}))^{-1} W^{-/2}(z) \nn \\
 &\stackrel{(a)}= - M^{1/2}(z) P_{12}^\ast(z^{-\ast}) S^{-\ast/2}(z^{-\ast})S^{-/2}(z)P_{11}(z)\times\nn \\ &\ \ \ \ P_{21}^\ast(z^{-\ast}) W^{\ast/2}(z^{-\ast}),
\end{align}
where $(a)$ follows from the factorization $$S^{1/2}(z)S^{\ast/2}(z^{-\ast}) = I + P_{12}(z)P_{12}^\ast(z^{-\ast}).$$ For convenience, the canonical factorizations required to compute \eqref{eq:proof_dec_main} (derived in Lemmas \ref{lemma:factor_W}, \ref{lemma:factor_M2}, and \ref{lemma:factorization_S}) are summarized as
\begin{align}
    M^{1/2}(z)&= R_M^{1/2} (I + K_M \c{F_E}G_E)\nn \\
    P_{12}^\ast(z^{-\ast})&= G_2^\ast \a{F^\ast}L^\ast \nn \\
    S^{-\ast/2}(z^{-\ast})&= (I +K_S^\ast \a{F^\ast}L^\ast)^{-1}R_S^{-\ast/2}\nn \\
    S^{-/2}(z)&= R_S^{-/2}(I + L \c{F}K_S)^{-1}\nn \\
    P_{11}(z)&=  L\c{F}G_1\nn \\
    P_{21}^\ast(z^{-\ast})&= G_1^\ast \a{F^\ast}H^\ast \nn \\
    W^{\ast/2}(z^{-\ast})&= (I +K_W^\ast \a{F^\ast}H^\ast)^{-1}R_W^{-\ast/2}.
\end{align}
We begin by combining some of the pairs products in \eqref{eq:proof_dec_main}
\begin{align}
    &P_{12}^\ast(z^{-\ast})S^{-\ast/2}(z^{-\ast})\nn\\
    &= G_2^\ast \a{F^\ast} (I +L^\ast K_S^\ast \a{F^\ast})^{-1}L^\ast \times \nn \\ &\ \ \ \ R_S^{-\ast/2}\\
    &= G_2^\ast \a{F_S^\ast}L^\ast R_S^{-\ast/2},
\end{align}
and similarly
\begin{align}
    S^{-/2}(z)P_{11}(z) 
    &= R_S^{-/2}L\c{F_S}G_1\nn\\
    P_{21}^\ast(z^{-\ast})W^{\ast/2}(z^{-\ast})
    &= G_1^\ast \a{F_W^\ast}H^\ast R_W^{-\ast/2}.
\end{align}
The combination of these products can be decomposed as
\begin{align}\label{eq:proof_decompo_6product}
    &P_{12}^\ast(z^{-\ast}) S^{-\ast/2}(z^{-\ast})S^{-/2}(z) P_{11}(z) P_{21}^\ast(z^{-\ast}) W^{\ast/2}(z^{-\ast})\nn \\
    &=G_2^\ast \a{F_S^\ast}L^\ast R_S^{-1}L\c{F_S} \times\nn\\
    &\ \ \ \ G_1G_1^\ast \a{F_W^\ast}H^\ast R_W^{-\ast/2}\nn \\
    &\stackrel{(a)}= G_2^\ast [F_S^\ast\a{F_S^\ast} U_1 + U_1F_S \c{F_S}   \nn\\
    &\ \ \ \ +U_1]G_1G_1^\ast \a{F_W^\ast}H^\ast R_W^{-\ast/2}\nn \\
    &\stackrel{(b)}= G_2^\ast [F_S^\ast\a{F_S^\ast} U_1G_1G_1^\ast \a{F_W^\ast} \nn\\
    & \ \ \ \ + U_1G_1G_1^\ast \a{F_W^\ast} \nn \\ &\ \ \ \ + U_1F_S U_2 F_W^\ast \a{F_W^\ast}\nn \\
    & \ \ \ \ + U_1F_S\c{F_S}F_SU_2 + U_1F_SU_2]H^\ast R_W^{-\ast/2}
\end{align}
where $(a)$ follows from \begin{align}
    &\a{F_S^\ast}L^\ast R_S^{-1}L\c{F_S}\nn\\& = \a{F_S^\ast}F_S^\ast U_1 + U_1F_S \c{F_S} + U_1
\end{align} where $U_1$ is the solution to the Lyapunov equation \begin{equation}
    L^\ast R_S^{-1}L + F_S^\ast U_1 F_S = U_1,
\end{equation} and $(b)$ follows from \begin{align}
    &\c{F_S} G_1G_1^\ast \a{F_W^\ast} \nn \\ &= \c{F_S}F_sU_2 + U_2F_W^\ast \a{F_W^\ast}+U_2
\end{align} where $U_2$ is the solution to the Sylvester equation \begin{equation}G_1G_1^\ast + F_S U_2 F_W^\ast = U_2.\end{equation}

Before combining \eqref{eq:proof_decompo_6product} with $- M^{1/2}(z)$, we denote the strictly anticausal part of \eqref{eq:proof_decompo_6product} as
\begin{align}\label{eq:proof_decompo_anti}
&G_2^\ast[F_S^\ast\a{F_S^\ast} U_1G_1G_1^\ast \a{F_W^\ast} \nn\\
&\ \ \ +  U_1U_2 \a{F_W^\ast} ]H^\ast R_W^{-\ast/2} \nn\\
&=     G_2^\ast \begin{pmatrix}
        U_1U_2&F_S^\ast 
    \end{pmatrix}
    \a{\begin{pmatrix}
        F_W^\ast & 0\\
        U_1G_1G_1^\ast & F_S^\ast
    \end{pmatrix}}\times\nn\\
    &\ \ \ \begin{pmatrix}
    H^\ast R_W^{-\ast/2}\\
    0 
    \end{pmatrix}\nn\\
    &\triangleq \tilde{H}_A \a{F_A^\ast }G_A.
\end{align}
We now decompose the product of $- M^{1/2}(z)$ and \eqref{eq:proof_decompo_anti} as
\begin{align}
    &- M^{1/2}(z)\tilde{H}_A \a{F_A^\ast}G_A\nn \\
    &= - R_M^{1/2} (I + K_M \c{F_E}G_E)\tilde{H}_A \a{F_A^\ast}\times \nn \\ & \ \ \ \ G_A\nn\\
&= - R_M^{1/2} \tilde{H}_A \a{F_A^\ast}G_A - R_M^{1/2} K_M \times\nn\\
& \ \ \ \ [\c{F_E}F_EU_3 + U_3F_A^\ast \a{F_A^\ast} + U_3]G_A,
\end{align}
where $U_3$ is the solution to the Lyapunov equation \begin{equation}
    U_3 = F_EU_3F_A^\ast + G_E\tilde{H}_A.
\end{equation}

To summarize, the strictly anticausal part of $$M^{1/2}(z)Q_2(z)W^{-/2}(z)$$ is
\begin{align}
    A(z)&= - R_M^{1/2} [ \tilde{H}_A +  K_M U_3F_A^\ast] \a{F_A^\ast}G_A \nn \\
    &\triangleq  H_A\a{F_A^\ast}G_A
\end{align}
with 
\begin{align}
    &H_A = - R_M^{1/2} (\tilde{H}_A +  K_M U_3F_A^\ast)\nn\\
    &F_A^\ast = \begin{pmatrix}
        F_W^\ast & 0\\
        U_1G_1G_1^\ast & F_S^\ast
    \end{pmatrix}, \quad G_A = \begin{pmatrix}
    H^\ast R_W^{-\ast/2}\\
    0 
    \end{pmatrix},
    \end{align}
  and \begin{equation}
      \tilde{H}_A = G_2^\ast \begin{pmatrix}
        U_1U_2&F_S^\ast 
    \end{pmatrix},
  \end{equation}  
while the causal part can presented as a sum of two causal functions
 \begin{align}
     C_1(z) &= - M^{1/2}(z) G_2^\ast U_1F_S [\c{F_S}F_SU_2 + U_2] \times \nn \\&\ \ \ \ H^\ast R_W^{-\ast/2}\\ 
     C_2(z) &= - R_M^{1/2} K_M [\c{F_E}F_EU_3 + U_3]G_A.
 \end{align}

\end{proof}

\begin{proof}[Proof of Lemma \ref{lemma:nehariSol}] This lemma is adapted from Theorem 5 in \cite{sabag2023regretoptimal} and consequently, its proof mirrors that of the aforementioned theorem. For comprehensive details, please refer to \cite{sabag2023regretoptimal}.
\end{proof}
\begin{proof}[Proof of Theorem \ref{th:SS_causal}]
Recall from Theorem \ref{th:reduction} that the Youla parameterization of the causal regret-optimal controller is:
\begin{align}\label{eq:proof_ss_c_prod3}
    Q_{c}(z)&= M^{-1/2}(z)(C_1(z) + C_2(z) + C_N(z))W^{1/2}(z)
\end{align}
where  
\begin{align}
    M^{-1/2}(z)&= (I + K_M \c{F_E}G_E)^{-1}R_M^{-/2} \nn \\
    &= (I - K_M \c{F_M}G_E)R_M^{-/2}\nn \\
    C_N(z)&= H_A \Pi(F_N \c{F_N} +I) K_N \nn \\
    W^{1/2}(z)    &= R_W^{-/2} (I - H \c{F_W}K_W) \nn \\
    C_1(z) &= - z M^{1/2}(z) G_2^\ast U_1F_S \c{F_S}U_2H^\ast \times \nn\\ & \ \ \ \ R_W^{-\ast/2} \\
     C_2(z) &= - z R_M^{1/2} K_M \c{F_E}U_3G_A.
\end{align}
First, we write each of the (partial) products in \eqref{eq:proof_ss_c_prod3} explicitly 
\begin{align}\label{eq:proof_th_causal_each3}
    M^{-1/2}(z) C_1(z) R_W^{-/2}&= - z G_2^\ast U_1F_S \c{F_S}U_2H^\ast \times\nn\\ &\ \ \ \ R_W^{-1}\nn\\
    M^{-1/2}(z) C_2(z) R_W^{-/2} 
    &= - z K_M \c{F_M}U_3G_AR_W^{-/2}\nn\\
    M^{-1/2}(z) C_N(z)R_W^{-/2} &= z (I - K_M \c{F_M}G_E)\times\nn\\
    &\ \ \ R_M^{-/2}H_A \Pi \c{F_N} K_N R_W^{-/2}.
\end{align}
The sum of the terms in \eqref{eq:proof_th_causal_each3} can be concisely expressed as
\begin{align}\label{eq:proof_th_sc_sum3}
    & z \begin{pmatrix}
    R_{M}^{-/2}H_A\Pi& - K_M & -G_2^\ast U_1 F_S
    \end{pmatrix}\times \nn \\
    &\left(zI - \begin{pmatrix}
    F_N&0&0\\
    G_ER_M^{-/2}H_A \Pi& F_M&0\\
    0&0&F_S
    \end{pmatrix} \right)^{-1} \times \nn \\ 
    &\begin{pmatrix}
    K_NR_W^{-/2}\\
    U_3 G_AR_W^{-/2}\\
    U_2 H^\ast R_W^{-1}
    \end{pmatrix}.
\end{align}
Finally, we can use the general formula:
\begin{align}\label{eq:proof_gen_formula1}
    & A \c{F}B(I - C\c{D}E) \nn\\
    &\ = 
    \begin{pmatrix}
    0& A
    \end{pmatrix}
    \c{\begin{pmatrix}
    D& 0\\
    -BC & F
    \end{pmatrix}}
\begin{pmatrix}
E\\B
\end{pmatrix},
\end{align}

to multiply \eqref{eq:proof_th_sc_sum3} with $(I - H \c{F_W}K_W)$ to obtain \eqref{eq:proof_ss_c_prod3}:
\begin{align}\label{eq:proof_th_causal_final}
    &Q_c(z)= z\begin{pmatrix}
    0& R_{M}^{-/2}H_A\Pi& - K_M & -G_2^\ast U_1 F_S
    \end{pmatrix}\times\nn\\
    & \left(zI - \begin{pmatrix}
    F_W & 0 & 0 &0 \\
    -K_NR_W^{-/2}H & F_N&0&0 \\
    - U_3 G_AR_W^{-/2}H & \tilde F& F_M&0 \\
    -U_2 H^\ast R_W^{-1}H & 0& 0& F_S
    \end{pmatrix}\right)^{-1}\times\nn\\
    &\begin{pmatrix}
K_W\\
K_NR_W^{-/2}\\
U_3 G_AR_W^{-/2}\\
U_2 H^\ast R_W^{-1}
\end{pmatrix},
\end{align}
with \begin{equation}
    \tilde F= G_ER_M^{-/2}H_A \Pi
\end{equation}
The proof is completed by applying the time-unit shift \begin{equation}
    z A\c{B}C = A (I + \c{B}B)C
\end{equation}to \eqref{eq:proof_th_causal_final} in order to obtain the regret-optimal controller in Theorem \ref{th:SS_causal}.
\end{proof}

\end{document}